\newcommand{\bm}[1]{\mbox{\boldmath{$#1$}}}
\theoremstyle{definition}
\newtheorem{theo}{Theorem}
\newtheorem{rem}{Remark}
\begin{document}

\title{Beamforming Optimization for Active Intelligent Reflecting Surface-Aided SWIPT}
%\author{Ying~Gao, Qingqing~Wu,~\IEEEmembership{Senior Member,~IEEE,} Guangchi~Zhang, Wen~Chen,~\IEEEmembership{Senior Member,~IEEE,} Derrick~Wing~Kwan~Ng,~\IEEEmembership{Fellow,~IEEE,} and Marco~Di~Renzo,~\IEEEmembership{Fellow,~IEEE}\vspace{-20mm}% <-this % stops a space	
\author{Ying~Gao, Qingqing~Wu, Guangchi~Zhang, Wen~Chen, Derrick~Wing~Kwan~Ng, and Marco~Di~Renzo\vspace{-14.8mm}% <-this % stops a space	
\thanks{Y.~Gao and Q.~Wu are with the State Key Laboratory of Internet of Things
for Smart City, University of Macau, Macao 999078, China (e-mail: yinggao@um.edu.mo; qingqingwu@um.edu.mo). G.~Zhang is with the School of Information Engineering, Guangdong University of Technology, Guangzhou 510006, China  (e-mail: gczhang@gdut.edu.cn). W.~Chen is with
the Department of Electronic Engineering, Shanghai Jiao Tong University,
Shanghai 201210, China (e-mail: wenchen@sjtu.edu.cn). D.~W.~K.~Ng is with the School of Electrical Engineering and Telecommunications, University of New South Wales, NSW 2052, Australia (e-mail: w.k.ng@unsw.edu.au). M.~Di~Renzo is with Universit\'e Paris-Saclay, CNRS, CentraleSup\'elec, Laboratoire des Signaux et Syst\`emes, 3 Rue Joliot-Curie, 91192 Gif-sur-Yvette, France. (marco.di-renzo@universite-paris-saclay.fr). }}% <-this % stops a space
%\thanks{G.~Zhang is with the School of Information Engineering, Guangdong University of Technology, Guangzhou 510006, China  (e-mail: gczhang@gdut.edu.cn).}% <-this % stops a space
%\thanks{D.~W.~K.~Ng is with the School of Electrical Engineering and Telecommunications, University of New South Wales, NSW 2052, Australia. (e-mail: w.k.ng@unsw.edu.au).}% <-this % stops a space
%\thanks{M.~Di~Renzo is with Universit\'e Paris-Saclay, CNRS, CentraleSup\'elec, Laboratoire des Signaux et Syst\`emes, 3 Rue Joliot-Curie, 91192 Gif-sur-Yvette, France. (marco.di-renzo@universite-paris-saclay.fr).}}% <-this % stops a space

\maketitle

\begin{abstract}
	\vspace{-1.6mm}
	\emph{Active} intelligent reflecting surface (IRS) has been recently proposed to alleviate the product path loss attenuation inherent in the IRS-aided cascaded channel. %Compared with its passive counterpart, an active IRS can reflect the incident signal with a controllable power amplification, which may be more appealing for enhancing the efficiency of wireless power transfer (WPT) and information communication performance. 
	In this paper, we study an active IRS-aided simultaneous wireless information and power transfer (SWIPT) system. Specifically, an active IRS is deployed to assist a multi-antenna access point (AP) to convey information and energy simultaneously to multiple single-antenna information users (IUs) and energy users (EUs). Two joint transmit and reflect beamforming optimization problems are investigated with different practical objectives. The first problem maximizes the weighted sum-power harvested by the EUs subject to individual signal-to-interference-plus-noise ratio (SINR) constraints at the IUs, while the second problem maximizes the weighted sum-rate of the IUs subject to individual energy harvesting (EH) constraints at the EUs. The optimization problems are non-convex and difficult to solve optimally. To tackle these two problems, we first rigorously prove that dedicated energy beams are not required for their corresponding semidefinite relaxation (SDR) reformulations and the SDR is tight for the first problem, thus greatly simplifying the AP precoding design. Then, by capitalizing on the techniques of alternating optimization (AO), SDR, and successive convex approximation (SCA), computationally efficient algorithms are developed to obtain suboptimal solutions of the resulting optimization problems. Simulation results demonstrate that, given the same total system power budget, significant performance gains in terms of operating range of wireless power transfer (WPT), total harvested energy, as well as achievable rate can be obtained by our proposed designs over benchmark schemes (especially the one adopting a passive IRS). Moreover, it is advisable to deploy an active IRS in the proximity of the users for the effective operation of WPT/SWIPT.
\end{abstract}

\vspace{-1mm}
\begin{IEEEkeywords}
\vspace{-2mm}
Active intelligent reflecting surface (IRS), simultaneous wireless information and power transfer (SWIPT), beamforming optimization, radio frequency-based energy harvesting (EH). 
\end{IEEEkeywords}

%\vspace{-3mm}
\section{Introduction}
With the increasing number and diversification of intelligent devices in Internet-of-Things (IoT) networks, emerging applications such as smart cities, mobile streaming media, and multisensory virtual reality become possible. One of the main challenges in the roll-out of reliable IoT is the energy limitation of battery-powered devices. Against this background, the dual use of radio frequency (RF) signals for enabling simultaneous wireless information and power transfer (SWIPT) has attracted intense interest \cite{2022_Qingqing_WIT_WET_overview}. However, energy users (EUs) and information users (IUs) typically operate with very different power sensitivity requirements (e.g., $-10$ dBm for EUs versus $-60$ dBm for IUs) \cite{2013_Jie_SWIPT_arxiv}. As such, the operating range of wireless power transfer (WPT) is fundamentally limited by the severe path loss over long signal propagation distances, which restricts the performance of SWIPT systems. Although numerous multiple-input multiple-output (MIMO) technologies can considerably improve the efficiency of both WPT and wireless information transmission (WIT), their practical implementations are still hindered by the required exceedingly high energy consumption and hardware cost \cite{2017_Shunqing_Intro}. 

Recently, intelligent reflecting surface (IRS) or reconfigurable intelligent surface (RIS), comprised of a large number of passive metamaterial elements, has emerged as a cost- and energy-efficient solution to unlock the potential of future wireless networks \cite{2020_Qingqing_IRS_Intro,2019_Qingqing_Joint,2020_Marco_RIS_vsRelaying,2020_Marco_RIS_JSAC,2021_Shun_AIRIS}. Specifically, by smartly adapting the phase shifts of all the IRS elements according to the time-varying environment, an IRS is capable of reconfiguring the wireless propagation channels for enhancing the desired signal strength and/or mitigating interference, thus improving the communication performance. It was firstly revealed in \cite{2019_Qingqing_Joint} that when the number of IRS elements, denoted by $N$, is large enough but finite, the receive signal power or the signal-to-noise ratio (SNR) gain scales with $\mathcal O(N^2)$ asymptotically. This pioneering result has then inspired major interests in studying the joint active and passive beamforming design for various system setups (see, e.g., \cite{2020_Qingqing_Discrete,2019_Miao_Secure,2020_Sixian_UAV,2020_Xinrong_AN,2021_Qian_SNR,2021_Marco_Rate,2021_Meng_CoMP,2021_Jiangbo_covert,2021_Xiaowei_IRS_UAV,2022_Xiaowei_IRS_UAV}). While the works \cite{2019_Qingqing_Joint,2020_Qingqing_Discrete,2019_Miao_Secure,2020_Sixian_UAV,2020_Xinrong_AN,2021_Qian_SNR,2021_Marco_Rate,2021_Meng_CoMP,2021_Jiangbo_covert,2021_Xiaowei_IRS_UAV,2022_Xiaowei_IRS_UAV} mainly focus on exploiting IRSs for effective WIT, the high passive beamforming gain promised by IRSs is also attractive for WPT \cite{2022_Qingqing_WIT_WET_overview,2020_Qingqing_IRS_Intro}. To reap this benefit, one line of research investigated IRS-aided wireless-powered communication networks (WPCNs), aiming at improving the communication performance by leveraging an IRS to assist WPT and WIT across different time slots \cite{2021_Dingcai_HyNOMA_TDMA,2022_Qingqing_WPCN,2022_Meng_WPCN,2022_zhendong_WPCN_robust}. On the other hand, another line of research focused on exploiting the high passive beamforming gain to enlarge the rate-energy region of SWIPT systems offered by IRSs \cite{2020_Qingqing_SWIPT_letter,2020_Qingqing_SWIPT_QoS,2020_Cunhua_SWIPT,2021_Shiqi_SWIPT_Discrete,2021_Shayan_SWIPT,2022_Dongfang_SWIPT}. For example, the authors of \cite{2020_Qingqing_SWIPT_letter} jointly optimized the transmit precoder at the access point (AP) and the phase shifts at the IRS to maximize the weighted sum-power of the EUs, while satisfying the minimum signal-to-interference-plus-noise ratio (SINR) requirements of the IUs. Inspired by \cite{2020_Qingqing_SWIPT_letter}, the authors of \cite{2020_Qingqing_SWIPT_QoS} studied the transmit power minimization subject to the individual quality-of-service (QoS) constraints at both the IUs and the EUs. Besides, the weighted sum-rate of all the IUs was maximized in \cite{2020_Cunhua_SWIPT}, where the weighted sum-power harvested by all the EUs is ensured to be higher than a predefined value for QoS provisioning. 

In spite of the appealing advantages of IRSs, the performance of passive IRS-aided systems may not offer a wide coverage extension because of the product path loss attenuation law, unless the number of IRS elements is very large \cite{2021_Qingqing_Tutorial}. Particularly, the end-to-end path loss of the transmitter-IRS-receiver link is generally significantly more severe than that of the unobstructed direct link, since the former is the product of the path losses of the transmitter-IRS and the IRS-receiver links. To circumvent this problem, one may need to install a large number of passive reflecting elements and/or to place the passive IRS in close vicinity to either the transmitter or the receiver, which, however, may not always be practically efficient or even feasible. As a remedy, the concept of \emph{active} IRS has been proposed recently (see, e.g., \cite{2021_Ruizhe_active_SIMO,2021_Zhang_active_6G}) to alleviate the product path loss attenuation law. In particular, an active IRS is generally compromised of a number of active reflecting elements, each of which independently integrates a reflection-type amplifier, e.g., a tunnel diode, and thus can not only alter the incident signals' phases, but also amplify them at the cost of additional low power consumption. Indeed, several papers in the field of metamaterials and communications have proved that an IRS can reflect the incident power with unitary power efficiency, for any angles of incidence and reflection, if local power gains and losses are present along the surface of the IRSs, see, e.g., \cite{2016_Estakhri_metasurfaces,2017_Ana_reflectors,2021_Marco_Electromagnetics}. The realization of these structures is more difficult than the conventional design of locally-passive IRSs, but it usually results in better performance. The surfaces in \cite{2016_Estakhri_metasurfaces,2017_Ana_reflectors,2021_Marco_Electromagnetics} are usually globally passive, i.e., the reflected power is not greater than the incident power. In active IRSs, on the other hand, the reflected power is greater than the incident power. Both, however, assume that local power amplifications can be realized along the surface of the IRS. Recently, some innovative efforts have been devoted to beamforming optimization for active IRS-aided systems \cite{2021_Ruizhe_active_SIMO,2021_Zhang_active_6G,2021_Changsheng_active_or_passive,2022_Kunzan_active_hybrid,2022_Piao_Active,2022_Guangji_Active}. For instance, the authors of \cite{2021_Ruizhe_active_SIMO} studied the SNR maximization problem when introducing an active IRS into a single-input multiple-output (SIMO) system. The simulation results demonstrate that, given the same IRS power budget, an active IRS-aided system outperforms its passive counterpart. In \cite{2021_Zhang_active_6G}, it is shown that an active IRS is capable of achieving noticeable capacity gains regardless of the strength of the direct link. Furthermore, the results in \cite{2021_Changsheng_active_or_passive} indicate that, with the optimized IRS placement, an active IRS has better performance than that of a passive IRS in some practical scenarios. \looseness=-1

Although the above works have validated the superiority of adopting an active IRS over a passive IRS under some system setups, to our best knowledge, the potential performance gain of integrating an active IRS into SWIPT systems remains uninvestigated. Moreover, since the information signals for the IUs can be utilized at the EUs for energy harvesting (EH), some fundamental questions remain to be answered in active IRS-aided SWIPT systems. First, are dedicated energy beams required to maximize the weighted sum-power of the EUs with the consideration of information transmission? This question is motivated by the result in \cite{2020_Qingqing_SWIPT_letter}, which shows that dedicated energy beams are not required for a passive IRS-aided SWIPT system, generalizing the finding in \cite{2013_Jie_SWIPT_arxiv} to the case with arbitrary user channels. For active IRS-assisted systems, however, the problem formulation of beamforming design is rather different due to the newly imposed amplification power constraint and the non-negligible IRS-amplified noise power. Thus, it is unknown whether the conclusions drawn in \cite{2020_Qingqing_SWIPT_letter} still hold for active IRS-aided SWIPT systems. The second open question is: is sending only information beams sufficient to satisfy the individual EH constraints at the EUs while maximizing the achievable weighted sum-rate of the IUs? In other words, can dedicated energy beams be removed or set to zero when solving the QoS-constrained weighted sum-rate maximization problem? In \cite{2020_Cunhua_SWIPT}, specially, the authors studied such a problem for a passive IRS-aided SWIPT system, by simply assuming that there is no energy beamforming applied at the AP. Thus, the aforementioned fundamental issues remain unsolved. \looseness=-1

Motivated by these considerations, we investigate an active IRS-aided SWIPT system where an active IRS is deployed to assist the information/power transfer from a multi-antenna AP to multiple single-antenna IUs and EUs, as shown in Fig. \ref{Fig:system model}. The transmit precoder at the AP and the reflection-coefficient matrix at the IRS are jointly optimized by considering two different design criteria. In particular, the first problem maximizes the weighted sum-power harvested by the EUs subject to individual SINR constraints at the IUs, while the second problem maximizes the weighted sum-rate of the IUs subject to individual EH constraints at the EUs. The main contributions of this paper are summarized as follows.
\begin{itemize}
	\item To obtain useful insights about active IRS-aided WPT systems, we first consider a special case of the weighted sum-power maximization problem where there exist no IUs and apply the alternating optimization (AO), semidefinite relaxation (SDR), and successive convex approximation (SCA) techniques to obtain a suboptimal solution. Next, for the general case where the EUs and the IUs coexist, we rigorously prove that, despite the presence of a new amplification power constraint and non-negligible IRS-amplified noise power, dedicated energy beams are not required, which greatly simplifies the AP precoding design. Exploiting the obtained insight, we then propose a computationally efficient algorithm to solve the resulting problem suboptimally. 
	\item For the weighted sum-rate maximization problem, we first unveil that dedicated energy beams are not needed for its SDR reformulation. Additionally, although the tightness of the SDR cannot be confirmed, a high-quality suboptimal solution for the original problem can be constructed from the optimal solution of the reformulated SDR problem. Building upon these insights, we consider the SDR reformulation with no dedicated energy beams to simplify the AP precoding design. Subsequently, we propose a computationally efficient suboptimal algorithm based on the AO and SCA techniques for the resulting problem and show how to approximately recover the transmit precoder if the obtained transmit beamforming matrices are not rank-one. 
	\item Numerical results demonstrate that by introducing an active IRS, the performance of SWIPT systems can be significantly enhanced in terms of operating range of WPT, total harvested energy, as well as achievable rate, as compared to passive IRS-aided SWIPT systems, under the assumption that the total system power budgets are the same. Furthermore, it is shown that the deployment of an active IRS close to the users is beneficial for WPT/SWIPT systems. 
\end{itemize}

The remainder of this paper is organized as follows. Section \ref{Section_model and formu} introduces the active IRS-aided SWIPT system model and presents the formulations of the weighted sum-power and sum-rate maximization problems. In Sections \ref{Section_solution to (P1)} and \ref{Section_solution to (P2)}, we propose efficient algorithms for solving the two formulated problems, respectively. Numerical results are presented in Section \ref{Section_simulation} to evaluate the performance of the proposed algorithms. Finally, Section \ref{Section_conclusion} concludes the paper. 

\emph{Notations:} Scalars, vectors, and matrices are denoted in lower-case, boldface lower-case, and boldface upper-case letters, respectively. $\mathbb C^{x\times y}$ denotes the space of $x\times y$ complex-valued matrices. $\mathbb H^N$ represents the set of all $N$-dimensional complex Hermitian matrices. $\mathbb E(\cdot)$ denotes the statistical expectation. The distribution of a circularly symmetric complex Gaussian (CSCG) random vector with a mean vector $\bm x$ and a covariance matrix $\mathbf \Sigma$ is denoted by $\mathcal {CN}\left(\bm x, \mathbf \Sigma\right)$. $\sim$ and $\triangleq$ stand for ``distributed as'' and ``defined as'', respectively. 
$\jmath$ denotes the imaginary unit, i.e., $\jmath^2 = 1$. The phase and real part of a complex number are denoted by ${\rm arg}(\cdot)$ and ${\rm Re}\{\cdot\}$, respectively. For a vector $\bm a$, $\left\|\bm a\right\|$ and $[\bm a]_{n}$ represent its Euclidean norm and $n$-th element, respectively. diag$(\bm a)$ denotes a diagonal matrix with each diagonal element being the corresponding element in $\bm a$. $\mathbf I_N$ is an identity matrix of size $N\times N$; $\mathbf 0$ and $\mathbf 1$ denote an all-zero matrix and an all-one matrix, respectively, with dimensions determined from the context. For a square matrix $\bm S$, ${\rm tr}(\bm S)$ represents its trace; $\bm S  \succeq \bm 0$ indicates that $\bm S$ is positive semidefinite. For a matrix $\bm A$ of arbitrary size, $\left\|\bm A\right\|_F$, ${\rm rank}(\bm A)$ and $[\bm A]_{i,j}$ denote its Frobenius norm, rank and $(i,j)$-th element, respectively. $(\cdot)^H$ corresponds to the conjugate transport of a vector or matrix. $ \odot$ denotes the Hadamard product. For a set $\mathcal K$, $\left|\mathcal K\right| $ denotes its cardinality. $\mathcal O(\cdot)$ expresses the big-O notation. 

\begin{figure}[!t]
	\vspace{-3mm}
	\centering
	\includegraphics[width=0.6\textwidth]{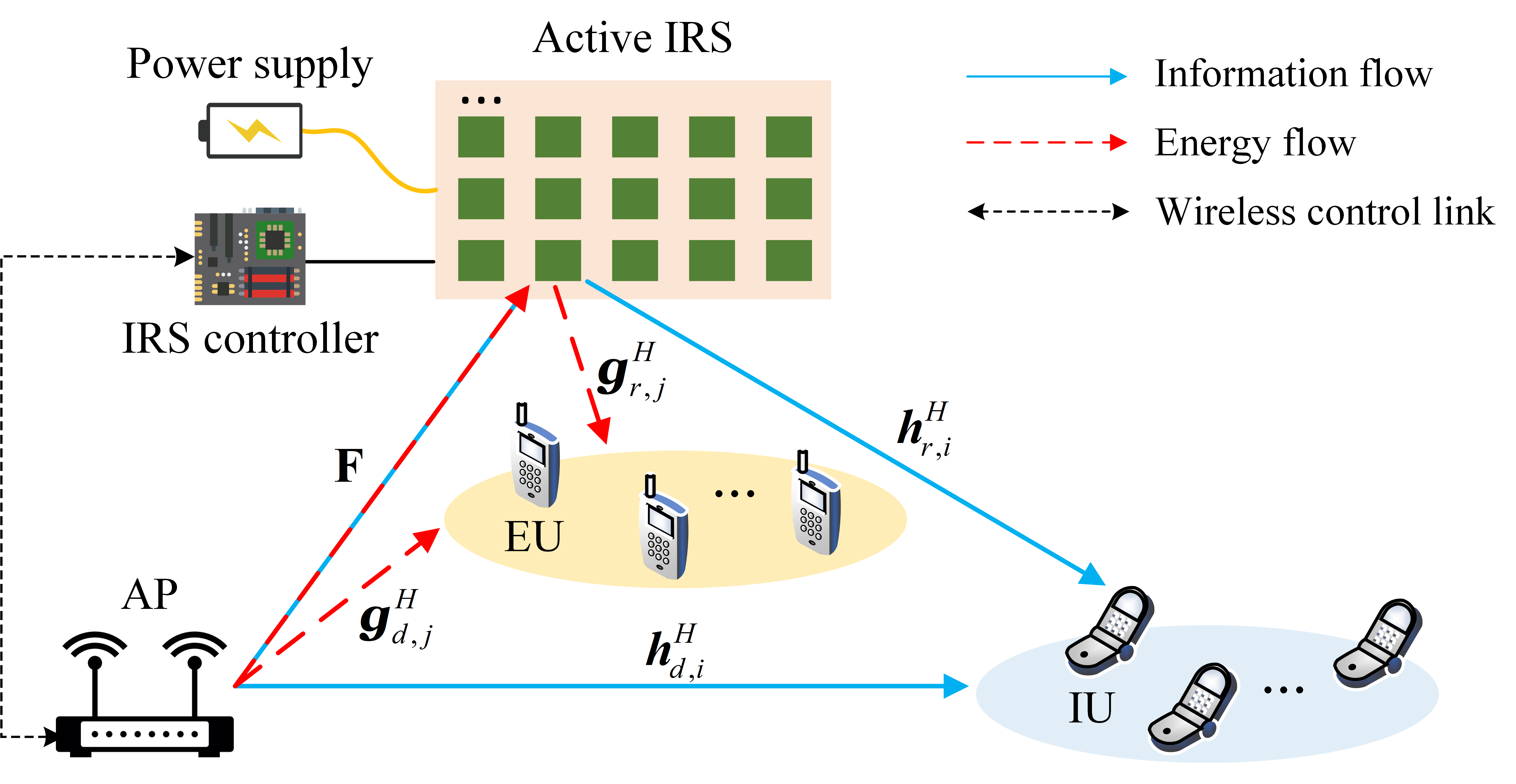}
	\caption{An active IRS-aided SWIPT system. } \label{Fig:system model}
	\vspace{-5mm}
\end{figure}

\section{System Model and Problem Formulation}\label{Section_model and formu}

\subsection{System Model}
As shown in Fig. \ref{Fig:system model}, we consider an active IRS-aided SWIPT system, which comprises an AP with $M$ antennas, an active IRS\footnote{Note that the revealed results and the proposed algorithms in this paper can be extended to the case with a \emph{hybrid} IRS \cite{2022_Nguyen_hybrid} after slight modifications.} with $N$ reflecting elements, and two sets of single-antenna users, i.e., $K_{\rm I}$ IUs and $K_{\rm E}$ EUs, denoted by $\mathcal {K_I} = \left\lbrace1, \cdots, K_{\rm I} \right\rbrace $ and $\mathcal {K_E} = \left\lbrace1, \cdots, K_{\rm E} \right\rbrace $, respectively. In particular, the active IRS is supported by an external power supply and each of its elements can not only alter the incident signals' phases, but also amplify the incident signals with an integrated reflection-type amplifier. Also, at the AP, we consider a linear transmit precoding for SWIPT with $\bm w_i \in \mathbb C^{M\times 1}$ and $\bm v_j \in \mathbb C^{M\times 1}$ denoting the beamforming vectors for IU $i$ and EU $j$, respectively. Hence, the transmitted signal from the AP can be expressed as 
%\fontsize{11.8pt}{\baselineskip}
\begin{align}\label{transmitted signal}
\bm x = \sum_{i\in\mathcal {K_I}}\bm w_i s_i^{\rm I} + \sum_{j\in\mathcal {K_E}}\bm v_j s_j^{\rm E},
\end{align}
where $s_i^{\rm I} \in \mathbb C$ and $ s_j^{\rm E}\in \mathbb C$ are the information-bearing signal for IU $i$ and the energy-carrying signal for EU $j$, respectively, satisfying $s_i^{\rm I} \sim\mathcal {CN}\left(0, 1\right) $, $\forall  i\in\mathcal {K_I}$ and $\mathbb E\left( \left|s_j^{\rm E} \right|^2 \right) = 1 $, $\forall j\in\mathcal{K_E}$ \cite{2013_Jie_SWIPT_arxiv}. The two signals are assumed to be independent to each other. Let $ P_{\text A}$ denote the total transmit power budget of the AP. From \eqref{transmitted signal}, we have $\mathbb E\left( \bm x^H\bm x\right) = \sum_{i\in\mathcal {K_I}}\left\| \bm w_i\right\|^2 + \sum_{j\in\mathcal {K_E}}\left\| \bm v_j\right\|^2 \leq P_{\text A}$.

For characterizing the theoretical performance gain brought by an active IRS, we assume a quasi-static fading environment and the channel state information (CSI)\footnote{If the IRS is equipped with active sensors, compressive sensing methods \cite{2021_Taha_channel_sensor, 2022_Schroeder_channel_sensor} can be applied to estimate the channels of the AP-IRS and the IRS-user links, respectively; otherwise, the CSI of individual links can be acquired exploiting some existing methods, e.g., \cite{2020_G.T._Channel_PARAFAC-Based,2021_Wei_Channel_RIS-Empowered}.} of all channels involved is assumed to be acquired perfectly by the AP.\footnote{It is worth mentioning that although this paper focuses on the case of perfect CSI, the results (i.e., Theorems \ref{theo1} and \ref{theo2}) can be generalized to the case of imperfect CSI since the proof of Theorems \ref{theo1} and \ref{theo2} does not rely on the accuracy of the channel estimation. Note that it is of great interest to develop robust beamforming designs for the case of imperfect CSI, which, however, goes beyond the scope of this paper and is left for future work.} Let $\bm h_{d,i}^H \in \mathbb C^{1\times M}$ and $\bm h_{r,i}^H \in \mathbb C^{1\times N}$ denote the baseband equivalent channels from the AP to IU $i$ and from the IRS to IU $i$, respectively. The corresponding channels for EU $j$ are denoted by $\bm g_{d,j}^H \in \mathbb C^{1\times M}$ and $\bm g_{r,j}^H \in \mathbb C^{1\times N}$, respectively, and $\mathbf F \in \mathbb C^{N\times M}$ denotes the channel from the AP to the IRS. Let $\mathbf \Theta = \text{diag}\left(u_1, \cdots, u_N\right) \in\mathbb C^{N\times N}$ denote the reflection-coefficient matrix at the IRS. Particularly, $u_n$ can be written as $u_n = \beta_ne^{\jmath\theta_n}$, $n \in \mathcal N \triangleq \left\lbrace1, \cdots, N \right\rbrace $, where $\beta_n \geq 0$ and $\theta_n \in [0, 2\pi) $ represent the reflection amplitude and phase shift of the $n$-th IRS element, respectively. The signal received at IU $i$ is then given by
\begin{align}
y_i^{\text I} = \underbrace{\bm h_{d,i}^H\bm x}_{\text{direct link}} + \underbrace{\bm h_{r,i}^H\bm \Theta\left(\mathbf F\bm x + \mathbf z\right)}_{\text{reflected link}} + n_i = \left(\bm h_{r,i}^H\mathbf \Theta\mathbf F + \bm h_{d,i}^H \right) \bm x + \bm h_{r,i}^H\mathbf \Theta\mathbf z + n_i, i\in\mathcal {K_I}, 
\end{align}
where $\mathbf z \sim\mathcal {CN}\left(\mathbf0, \sigma_z^2\mathbf I_N \right)$ and $n_i \sim\mathcal {CN} \left(0, \sigma_i^2 \right)$ denote the noise introduced by the active IRS\cite{2021_Ruizhe_active_SIMO} and the additive white Gaussian noise (AWGN) at IU $i$, respectively, with $\sigma_z^2$ and $\sigma_i^2$ being the corresponding noise variances. By assuming that the interference caused by the energy signals cannot be cancelled by the IUs, the SINR at IU $i$ can be written as  
\begin{align}\label{equ:SINR_expression}
\text{SINR}_i = \frac{\left|\bm h_i^H \bm w_i \right|^2 }{\sum_{k\in \mathcal {K_I}\backslash\{i\}}\left|\bm h_i^H \bm w_k \right|^2 + \sum_{j\in \mathcal {K_E}}\left|\bm h_i^H \bm v_j \right|^2 + \sigma_z^2\left\|\bm h_{r,i}^H\mathbf \Theta\right\|^2 + \sigma_i^2}, i\in\mathcal {K_I},
\end{align}
where $\bm h_i^H \triangleq \bm h_{r,i}^H\mathbf \Theta\mathbf F + \bm h_{d,i}^H$ denotes the equivalent end-to-end channel from the AP to IU $i$. Accordingly, the achievable rate at IU $i$ in bits/second/Hz (bps/Hz) is given by $R_i = \log_2\left(1 + \text{SINR}_i \right) $. On the other hand, the received RF power at EU $j$, denoted by $Q_j$, can be expressed as
\begin{align}\label{equ:Q_j}
Q_j = \sum_{i\in \mathcal {K_I}}\left|\bm g_j^H \bm w_i \right|^2 + \sum_{m\in \mathcal {K_E}}\left|\bm g_j^H \bm v_m \right|^2 + \sigma_z^2\left\|\bm g_{r,j}^H\mathbf \Theta\right\|^2,  j\in\mathcal {K_E},
\end{align}
where $\bm g_j^H \triangleq \bm g_{r,j}^H\mathbf \Theta\mathbf F + \bm g_{d,j}^H$ denotes the equivalent end-to-end channel from the AP to EU $j$.\footnote{For simplicity, we adopt a linear EH model which is widely used in existing works such as \cite{2021_Dingcai_HyNOMA_TDMA,2022_Qingqing_WPCN,2022_Meng_WPCN,2020_Qingqing_SWIPT_letter,2020_Qingqing_SWIPT_QoS,2020_Cunhua_SWIPT}. The extension to the case with a non-linear EH model \cite{2021_Shayan_SWIPT} is discussed in Remarks \ref{rem3} and \ref{rem5}. Besides, the power of the antenna noise is ignored in EH while that of the noise introduced by the active IRS is considered. This is because the former is generally a negligible constant, while the latter could be non-negligible, especially when the number of the reflecting elements, $N$, is sufficiently large.} 

Suppose that the active IRS is endowed with a maximum amplification power budget $P_{\text{I}}$. Then, we have $\sum_{i\in\mathcal {K_I}}\left\|\mathbf \Theta\mathbf F\bm w_i\right\|^2 + \sum_{j\in\mathcal {K_E}}\left\|\mathbf \Theta\mathbf F\bm v_j\right\|^2 + \sigma_z^2\left\|\mathbf \Theta\right\|_F^2 \leq P_{\text{I}}$. 

\subsection{Problem Formulation}
In this paper, two joint transmit and reflect beamforming optimization problems are considered aiming at two different design criteria. First, we aim to maximize the weighted sum-power received at the EUs while satisfying individual SINR constraints at the IUs, given by $\gamma_i$, $i\in\mathcal {K_I}$. From \eqref{equ:Q_j}, the weighted sum-power received at the EUs can be written as
\begin{align}
\sum_{j\in\mathcal {K_E}}\alpha_j Q_j = \sum_{i\in\mathcal {K_I}}\bm w_i^H\bm S\bm w_i + \sum_{j\in\mathcal {K_E}} \bm v_j^H\bm S\bm v_j + \sum_{j\in\mathcal {K_E}}\alpha_j\sigma_z^2\left\|\bm g_{r,j}^H\mathbf \Theta\right\|^2,
\end{align}
where $\bm S \triangleq \sum_{j\in\mathcal {K_E}}\alpha_j\bm g_j\bm g_j^H $ with $\alpha_j \geq 0$ denoting the given energy weight for EU $j$. Specifically, the larger the value of $\alpha_j$, the higher priority to EU $j$ for EH. Accordingly, the problem of interest can be formulated as
\begin{subequations}
	\begin{eqnarray}
	\hspace{-1cm}\text{(P1)}: &\underset{\left\lbrace \boldsymbol w_i\right\rbrace, \left\lbrace \boldsymbol v_j\right\rbrace,\mathbf \Theta}{\max}& \sum_{i\in\mathcal {K_I}}\bm w_i^H\bm S\bm w_i + \sum_{j\in\mathcal {K_E}} \bm v_j^H\bm S\bm v_j + \sum_{j\in\mathcal {K_E}}\alpha_j\sigma_z^2\left\|\bm g_{r,j}^H\mathbf \Theta\right\|^2 \\
	&\text{s.t.}&  \text{SINR}_i \geq \gamma_i, \forall i\in\mathcal {K_I}, \\
	&&  \sum_{i\in\mathcal {K_I}}\left\|\bm w_i \right\|^2 + \sum_{j\in\mathcal {K_E}}\left\|\bm v_j \right\|^2 \leq P_{\text{A}}, \label{cons_PA}\\
	&&  \sum_{i\in\mathcal {K_I}}\left\|\mathbf \Theta\mathbf F\bm w_i\right\|^2 + \sum_{j\in\mathcal {K_E}}\left\|\mathbf \Theta\mathbf F\bm v_j\right\|^2 + \sigma_z^2\left\|\mathbf \Theta\right\|_F^2 \leq P_{\text{I}}. \label{cons_PI}
	\end{eqnarray}
\end{subequations} 
Problem (P1) is applicable to the scenarios where the IUs have stringent SINR requirements (e.g., delay-limited transmission) while the EUs only require opportunistic EH. 

In addition to (P1), we are also interested in maximizing the weighted sum-rate of the IUs subject to individual EH constraints at the EUs, given by $E_j$, $j\in\mathcal {K_E}$. Let $\mu_i$ denote the weighting factor that controls the access priority of IU $i$. Then, we formulate the optimization problem as follows \looseness=-1
\begin{subequations}
	\begin{eqnarray}
	\text{(P2)}: &\underset{\left\lbrace \boldsymbol w_i\right\rbrace, \left\lbrace \boldsymbol v_j\right\rbrace,\mathbf \Theta}{\max}& \sum_{i\in\mathcal {K_I}}\mu_i \log_2\left( 1 + \text{SINR}_i\right)  \\
	&\text{s.t.}&  \sum_{i\in \mathcal {K_I}}\left|\bm g_j^H \bm w_i \right|^2 + \sum_{m\in \mathcal {K_E}}\left|\bm g_j^H \bm v_m \right|^2 + \sigma_z^2\left\|\bm g_{r,j}^H\mathbf \Theta\right\|^2 \geq E_j, \forall j\in\mathcal {K_E}, \label{P2_cons:E}\\
	&& \eqref{cons_PA}, \eqref{cons_PI}.
	\end{eqnarray}
\end{subequations}  
In contrast to (P1), (P2) applies to the scenarios where the EUs impose strict requirements on EH while the IUs have relaxed constraints for information transmission. 

Since $\{\bm w_i\}$, $\{\bm v_j\}$, and $\mathbf \Theta$ are intricately coupled in the $\text{SINR}_i$ of IU $i$ given in \eqref{equ:SINR_expression}, the received power $Q_j$ of EU $j$ given in \eqref{equ:Q_j}, and the amplification power constraint given in \eqref{cons_PI}, optimization problems (P1) and (P2) are both non-convex, and are hence challenging to solve optimally. Moreover, for the considered active IRS-assisted SWIPT system, it remains unknown whether introducing dedicated energy beams $\{\bm v_j\}$ is necessary for achieving the optima of (P1) and (P2) in the presence of information beams $\{\bm w_i\}$. 

\section{Proposed Solution to Problem (P1)}\label{Section_solution to (P1)}
In this section, we first consider and solve a special case of (P1) with no IUs to gain useful insights into active IRS-aided WPT systems. Then, we answer the question of whether the optimal solution to (P1) requires dedicated energy beams in the general case where the IUs and the EUs coexist. Finally, an efficient algorithm is proposed. 

%\vspace{-3mm}
\subsection{Special Case with No IUs}\label{subsec_WPT}
By setting $\bm w_i = \mathbf 0$, $\gamma_i = 0$, $\forall i\in\mathcal {K_I}$, (P1) is simplified to 
\begin{subequations}
	\begin{eqnarray}
	\hspace{-1.5cm}\text{(P1-NoIUs)}: &\underset{\left\lbrace \boldsymbol v_j\right\rbrace,\mathbf \Theta}{\max}& \sum_{j\in\mathcal {K_E}} \bm v_j^H\bm S\bm v_j + \sum_{j\in\mathcal {K_E}}\alpha_j\sigma_z^2\left\|\bm g_{r,j}^H\mathbf \Theta\right\|^2  \\
	&\text{s.t.}&  \sum_{j\in\mathcal {K_E}}\left\|\bm v_j \right\|^2 \leq P_{\text{A}},\\
	&& \sum_{j\in\mathcal {K_E}}\left\|\mathbf \Theta\mathbf F\bm v_j\right\|^2 + \sigma_z^2\left\|\mathbf \Theta\right\|_F^2 \leq P_{\text{I}}. \label{WPT_cons:amp}
	\end{eqnarray}
\end{subequations} 
The coupling between $\{\bm v_j\}$ and $\bm \Theta$ in both the objective function and constraint \eqref{WPT_cons:amp} introduces non-convexities to (P1-NoIUs). To tackle this issue, we apply the AO method to iteratively optimize $\{\bm v_j\}$ and $\bm \Theta$ until convergence is reached, similar to \cite{2020_Qingqing_SWIPT_letter} but with the exceptions given below. \looseness=-1  

\subsubsection{Optimizing $\{\bm v_j\}$} For any given $\mathbf \Theta$, it was shown in \cite{2020_Qingqing_SWIPT_letter} that, without constraint \eqref{WPT_cons:amp}, employing one dedicated energy beam is sufficient to achieve the optimality and the optimal energy precoder should be aligned to the dominant eigenvector of $\bm S$, denoted by $\bm {v_S}$. However, with the newly introduced constraint \eqref{WPT_cons:amp}, whether employing one dedicated energy beam is sufficient to achieve the optimality is unknown, while aligning the energy precoder to $\bm {v_S}$ may help to improve the objective value to a certain extent but yield a highly suboptimal solution. Therefore, the result in \cite{2020_Qingqing_SWIPT_letter} is not directly applicable to the considered problem. In the following, we solve the subproblem with respect to $\{\bm v_j\}$ by applying the SDR technique \cite{2022_Kaitao_UAV_letter} and answer the aforementioned question. Specifically, we define $\bm W_{\mathrm E} = \sum_{j\in \mathcal {K_E}}\bm v_j\bm v_j^H$, which needs to satisfy $\bm W_{\mathrm E} \succeq \bm 0$ and ${\rm rank}\left(\bm W_{\mathrm E}\right) \leq \min\left(M, K_{\rm E} \right)$. By relaxing the rank constraint on $\bm W_{\mathrm E}$, the subproblem can be recast as (with some constant terms ignored)
\begin{subequations} \label{WPT_sub1_SDR}
	\begin{eqnarray}
	&\underset{\boldsymbol W_{\mathrm E}\in\mathbb H^M}{\max}&  \text{tr}\left(\bm S \bm W_{\mathrm E}\right)   \\
	&\text{s.t.}&  {\rm tr}\left(\bm W_{\mathrm E} \right) \leq P_{\text{A}}, \ {\rm tr}\left(\bm C\bm W_{\mathrm E} \right)  \leq \bar P_{\text{I}}, \ \bm W_{\mathrm E} \succeq \bm 0,
	\end{eqnarray}
\end{subequations}
where $\bm C \triangleq\mathbf F^H\bm \Theta^H \bm \Theta\mathbf F$ and $\bar P_{\text{I}} \triangleq P_{\text{I}} - \sigma_z^2\left\|\mathbf \Theta\right\|_F^2$. As problem \eqref{WPT_sub1_SDR} is a convex semidefinite program (SDP), it can be optimally solved by existing convex optimization solvers, e.g., CVX \cite{2004_S.Boyd_cvx}. Regarding the rank of the obtained optimal solution, we have the following remark. 
\vspace{-3mm}
\begin{rem}\label{rem1}
According to \cite[Theorem 3.2]{2010_Yongwei_Rank}, there exists an optimal solution $\bm W_{\mathrm E}^*$ to problem \eqref{WPT_sub1_SDR} such that $\left( {\rm rank}\left(\bm W_{\mathrm E}^* \right)\right)^2 \leq 2$, where $2$ corresponds to the number of linear constraints in \eqref{WPT_sub1_SDR}.  Moreover, if $P_{\rm A} > 0$ and $\bar P_{\rm I} > 0$, $\bm W_{\mathrm E}^* = \mathbf 0$ cannot be the optimal solution. Hence, an optimal solution that fulfills $ {\rm rank}\left(\bm W_{\mathrm E}^* \right) = 1$ should exist for problem \eqref{WPT_sub1_SDR}. 
\end{rem}
\vspace{-3mm}
Although Remark \ref{rem1} only indicates the existence of a rank-one optimal solution, the rank-reduction techniques in \cite{2010_Yongwei_Rank} can always be applied to construct a rank-one optimal solution from its non-rank-one optimal counterpart. Once obtained, $\bm W_{\mathrm E}^*$ can be decomposed as $\bm W_{\mathrm E}^* = \bm v_0^*\left( \bm v_0^*\right) ^H$ via the Cholesky decomposition to recover the desired energy beamforming vector. 
Thus, we can choose to send only one energy beam to simplify the transmitter implementation by setting $\bm v_k = \bm v_0^*$ for any $k\in\mathcal{K_E}$ and $\bm v_j = \mathbf 0$, $\forall j\in\mathcal{K_E}\backslash\{k\}$.

\subsubsection{Optimizing $\mathbf \Theta$} For any given $\bm v_0^*$, $\mathbf \Theta$ can be optimized by solving (P1-NoIUs) with only the constraint in \eqref{WPT_cons:amp}. To facilitate the solution design, we define $\bm u = \left[u_1, \cdots, u_N \right]^H$, $\bar{\bm u} = \left[\bm u; 1\right]$, $\mathbf G_j = \left[ \text{diag}\left( \bm g_{r,j}^H\right) \mathbf F; \bm g_{d,j}^H \right] $, $\mathbf Z_j = \text{diag}\left(\left[ \bm g_{r,j}^H, 0\right] \right)\text{diag}\left(\left[ \bm g_{r,j}; 0\right] \right)$, $\mathbf \Phi = \text{diag}\left(\left[ \mathbf F \bm v_0^*; 0\right] \right)\left(\text{diag}\left(\left[ \mathbf F \bm v_0^*; 0\right] \right) \right)^H$, and $\mathbf P = \text{diag}\left(\left[\mathbf 1_{N\times 1}; 0\right] \right)$. Then, we have $\bm g_{r,j}^H\mathbf \Theta\mathbf F + \bm g_{d,j}^H = \bar{\bm u}^H\mathbf G_j$, $\left\|\bm g_{r,j}^H\mathbf \Theta\right\|^2 = \bar{\bm u}^H \mathbf Z_j\bar{\bm u}$, $\left\|\mathbf \Theta\mathbf F\bm v_0^*\right\|^2 = \bar{\bm u}^H \mathbf \Phi\bar{\bm u}$, and $\left\|\mathbf \Theta\right\|_F^2 = \bar{\bm u}^H\mathbf P\bar{\bm u}$. As a result, the subproblem is equivalent to 
\begin{subequations}\label{WPT_sub2_equiv}
	\begin{eqnarray}
	&\underset{\bar{\boldsymbol u}}{\max}& \sum_{j\in\mathcal {K_E}} \alpha_j \left|\bar{\bm u}^H\mathbf G_j\bm v_0^* \right|^2  + \sum_{j\in\mathcal {K_E}} \alpha_j \sigma_z^2\bar{\bm u}^H \mathbf Z_j\bar{\bm u} \label{WPT_sub2_equiv_obj}\\
	&\text{s.t.}& \bar{\bm u}^H \mathbf \Phi\bar{\bm u} + \sigma_z^2\bar{\bm u}^H\mathbf P\bar{\bm u} \leq P_{\text{I}}, \label{WPT_sub2_equiv_cons1}\\
	&& [\bar{\bm u}]_{N+1} = 1. \label{WPT_sub2_equiv_cons2}
	\end{eqnarray}
\end{subequations}
Although maximizing a convex function results in a non-convex problem \cite{2004_S.Boyd_cvx}, the convexity of the objective function in \eqref{WPT_sub2_equiv_obj} allows us to apply the iterative SCA technique for solving problem \eqref{WPT_sub2_equiv} suboptimally. To begin with, the objective function can be written in a compact form as $\bar{\bm u}^H\mathbf A\bar{\bm u}$, where $\mathbf A = \sum_{j\in\mathcal {K_E}} \alpha_j\left( \mathbf G_j\bm v_0^*(\bm v_0^*)^H\mathbf G_j^H + \sigma_z^2\mathbf Z_j\right) \geq 0$. Next, for a given local feasible point $\bar{\bm u}^{(l)}$ in the $l$-th iteration, the first-order Taylor expansion is a lower bound for $\bar{\bm u}^H\mathbf A\bar{\bm u}$ that can be expressed as
\begin{align}\label{WPT_sub2_equiv_obj_sca}
\bar{\bm u}^H\mathbf A\bar{\bm u} \geq 2{\rm Re}\left\lbrace{\bar{\bm u}^H\mathbf A\bar{\bm u}^{(l)}}\right\rbrace - \left( \bar{\bm u}^{(l)}\right)^H\mathbf A\bar{\bm u}^{(l)} \triangleq \mathcal G^{(l)}(\bar{\bm u}). 
\end{align}
By replacing the objective function in \eqref{WPT_sub2_equiv_obj} with $\mathcal G^{(l)}(\bar{\bm u})$, a suboptimal solution to \eqref{WPT_sub2_equiv} can be obtained by solving the following convex quadratically constrained quadratic program (QCQP): 
%{\setlength\abovedisplayskip{4pt}
%\setlength\belowdisplayskip{4pt}
\begin{align}\label{WPT_sub2_sca} 
\underset{\bar{\boldsymbol u}}{\max} \ \mathcal G^{(l)}(\bar{\bm u}) \hspace{4mm} \text{s.t.} \  \eqref{WPT_sub2_equiv_cons1}, \eqref{WPT_sub2_equiv_cons2}. 
\end{align}
%\vspace{-1mm}
\begin{rem}\label{rem2}
	It is observed that the value of the LHS of constraint \eqref{WPT_sub2_equiv_cons1} is independent of the phase value of each element in $\bar {\bm u}$. Therefore, the optimal phase of $u_n$ can be obtained in a closed-form expression given by $\theta_n^* = 0$ if $\left[\mathbf A\bar{\bm u}^{(l)} \right]_n = 0$ and $\theta_n^* = {\rm arg}\left(\left[\mathbf A\bar{\bm u}^{(l)} \right]_n\right)$ otherwise, $\forall n$. 
	Subsequently, only the real-number magnitudes of $\{u_n\}$ are needed to be optimized by solving problem \eqref{WPT_sub2_sca} via existing convex optimization solvers, e.g., CVX \cite{2004_S.Boyd_cvx}. This helps to reduce the simulation time. %\looseness=-1
\end{rem}
\vspace{-3mm}
\subsubsection{Convergence and Complexity Analysis}\label{subsubsec_WPT_complexity} As the objective value of (P1-NoIUs) is non-decreasing over the iterations and also upper-bounded by a finite value, the proposed algorithm is guaranteed to converge. Besides, the main computational burden stems from solving the SDP in \eqref{WPT_sub1_SDR} and the QCQP in \eqref{WPT_sub2_sca}. Simply speaking, given a solution accuracy $\varepsilon$, problem \eqref{WPT_sub1_SDR} can be solved with a computational complexity\footnote{According to \cite{2010_Imre_SDR_complexity}, for an SDP problem with $m$ SDP constraints, each of which involves an $n$-dimensional positive semidefinite matrix, the computational complexity for solving it is given by $\mathcal O\left( \sqrt{n}\log\left( \frac{1}{\varepsilon}\right)\left(mn^3 + m^2n^2 + m^3 \right)\right)$. For the SDP in \eqref{WPT_sub1_SDR}, we have $m=2$ and $n=M$.} of $\mathcal O\left(M^{3.5}\log(\frac{1}{\varepsilon})\right)$, while the arithmetic cost\footnote{According to \cite{1994_Nesterov_QCQP_complexity}, the arithmetic cost for solving a QCQP problem with $m$ variables and $n$ quadratic constraints is given by $\mathcal O\left(\sqrt{m}\left(mn^2 + n^3\right)\ln\left( \frac{2mV}{\varepsilon}\right)\right)$. For the QCQP in \eqref{WPT_sub2_sca}, we have $m=N+1$ and $n=2$.} of solving problem \eqref{WPT_sub2_sca} is less than $\mathcal O\left(N^{1.5}\ln\left( \frac{2(N+1)V}{\varepsilon}\right) \right)$, where $V$ is a constant defined in \cite{1994_Nesterov_QCQP_complexity} and $V > \varepsilon$. Thus, the total complexity of the proposed algorithm is about 
\begin{align} 
\mathcal O\left[\Upsilon\left(M^{3.5}\log\left( \frac{1}{\varepsilon}\right) + N^{1.5}\ln\left( \frac{2(N+1)V}{\varepsilon}\right)  \right)\right],
\end{align}
with $\Upsilon$ denoting the number of iterations required for convergence. 

%\vspace{-2mm}
\subsection{Are Dedicated Energy Beams Necessary? }\label{subsec:P1_answer}
%\vspace{-1mm}
Here, the general case where at least one IU coexists with $K_{\rm E}$ EUs is studied. 
In addition to the variables $\bm C$, $\bar P_{\text I}$, and $\bm W_{\mathrm E}$ defined in the previous subsection, we define $\bm W_i = \bm w_i\bm w_i^H$, $\forall i \in \mathcal{K_I}$. Then, it follows that $\bm W_i \succeq \bm 0$ and ${\rm rank}\left(\bm W_i \right) \leq 1 $, $\forall i\in\mathcal{K_I}$. By dropping the rank constraints on $\bm W_{\mathrm E}$ and $\{\bm W_i\}$, the SDR reformulation of (P1) can be expressed as  
%\fontsize{11.3pt}{\baselineskip}
\begin{subequations}\label{P1-SDR1}
	\begin{eqnarray}
	\hspace{-1.5cm}\text{(P1-SDR1)}: &\underset{\substack{\left\lbrace \boldsymbol W_i\in\mathbb H^M\right\rbrace, \\\boldsymbol W_{\mathrm E}\in\mathbb H^M, \mathbf \Theta}}{\max}& \sum_{i\in\mathcal {K_I}} {\rm tr}\left(\bm S \bm W_i \right) +  {\rm tr}\left(\bm S \bm W_{\mathrm E} \right) + \sum_{j\in\mathcal {K_E}}\alpha_j\sigma_z^2\left\|\bm g_{r,j}^H\mathbf \Theta\right\|^2 \label{P1-SDR1_obj}\\
	&\text{s.t.}& \frac{{\rm tr}\left(\bm h_i\bm h_i^H\bm W_i \right) }{\gamma_i} - \sum_{k\in \mathcal {K_I}\backslash\{i\}}{\rm tr}\left(\bm h_i\bm h_i^H\bm W_k \right) - {\rm tr}\left(\bm h_i\bm h_i^H\bm W_{\mathrm E}\right) \nonumber \\
	&& - \bar \sigma_i^2 \geq 0, \forall i\in\mathcal {K_I}, \label{SINR_orig}\\
	&& \sum_{i\in\mathcal {K_I}}{\rm tr}\left(\bm W_i \right) + {\rm tr}\left(\bm W_{\mathrm E}\right) \leq P_{\text{A}},\\
	&& \sum_{i\in\mathcal {K_I}}{\rm tr}\left(\bm C\bm W_i \right) + {\rm tr}\left(\bm C\bm W_{\mathrm E}\right) \leq \bar P_{\text{I}}, \label{P1-SDR1_cons:amp}\\
	&& \bm W_i \succeq \mathbf 0, \forall i\in\mathcal {K_I}, \; \bm W_{\mathrm E} \succeq \mathbf 0, 
	\end{eqnarray}
\end{subequations}
\normalsize where $\bar \sigma_i^2 \triangleq \sigma_z^2\left\|\mathbf h_{r,i}^H\mathbf \Theta\right\|^2 + \sigma_i^2$, $\forall i\in\mathcal{K_I}$. If we remove constraint \eqref{P1-SDR1_cons:amp}, neglect the IRS-amplified noise power in \eqref{P1-SDR1_obj} and \eqref{SINR_orig}, and set $\beta_n = 1$, $\forall n\in\mathcal N$, (P1-SDR1) is reduced to the same problem as in \cite{2020_Qingqing_SWIPT_letter} for passive IRS-aided SWIPT systems. It was proved in \cite[Proposition 1]{2020_Qingqing_SWIPT_letter} that transmitting dedicated energy signals is not necessary. In particular, the proof of \cite[Proposition 1]{2020_Qingqing_SWIPT_letter} relies on the result in \cite[Appendix A]{2013_Jie_SWIPT_arxiv}, which states that the optimal $\{\bm W_i\}$ and $\bm W_{\mathrm E}$ should all lie in the subspace spanned by one vector in the case that the optimal dual variables associated with the SINR constraints are all equal to zero. However, in the presence of constraint \eqref{P1-SDR1_cons:amp}, the above result cannot be proved to hold for (P1-SDR1) by following the same derivation as in \cite[Appendix A]{2013_Jie_SWIPT_arxiv}. Hence, for our considered active IRS-aided SWIPT system, we need to re-examine Proposition 1 in \cite{2020_Qingqing_SWIPT_letter}. Fortunately, by exploiting the structure of (P1-SDR1), we have the following theorem. 
\vspace{-3mm}
\begin{theo}\label{theo1}
\emph{Assuming that (P1-SDR1) is feasible for $P_{\rm A} > 0$, $P_{\rm I} > 0$, and $\gamma_i > 0$, $\forall i\in\mathcal{K_I}$, then there always exists an optimal solution to (P1-SDR1), denoted by $\left\lbrace\{\bm W_i^*\}, \bm W_{\mathrm E}^*, \mathbf \Theta^* \right\rbrace $, satisfying $\bm W_{\mathrm E}^* = \mathbf 0$ and ${\rm rank}\left(\bm W_i^*\right) = 1$, $\forall i\in\mathcal{K_I}$.} 
\vspace{-3mm}
\end{theo}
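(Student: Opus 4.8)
The plan is to prove the two assertions—existence of an optimal solution with $\bm W_{\mathrm E}^* = \mathbf 0$, and that the information covariance matrices can be taken rank-one—by two independent and essentially elementary arguments, thereby sidestepping the dual-subspace argument of \cite[Appendix A]{2013_Jie_SWIPT_arxiv} (on which \cite{2020_Qingqing_SWIPT_letter} relies) that, as noted, breaks once the amplification constraint \eqref{P1-SDR1_cons:amp} is present. First I would fix $\mathbf \Theta$ at an optimal value $\mathbf \Theta^*$; this freezes $\bm C$, $\bar P_{\text I}$, $\{\bm h_i\}$, $\bm S$, and $\{\bar\sigma_i^2\}$ and reduces (P1-SDR1) to a convex SDP in $\{\bm W_i\}$ and $\bm W_{\mathrm E}$ only. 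Since the claim is about the \emph{existence} of a structured optimizer, it suffices to produce one for this residual SDP at the fixed $\mathbf \Theta^*$.

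For the first assertion I would use a direct absorption construction instead of duality. Given any optimal $\{\bm W_i^*\}, \bm W_{\mathrm E}^*$, pick one index (say IU $1$) and set $\hat{\bm W}_1 = \bm W_1^* + \bm W_{\mathrm E}^*$, $\hat{\bm W}_i = \bm W_i^*$ for $i\neq 1$, and $\hat{\bm W}_{\mathrm E} = \mathbf 0$. The key observation is that $\bm W_{\mathrm E}$ enters the objective \eqref{P1-SDR1_obj}, the AP-power budget, and the amplification budget \eqref{P1-SDR1_cons:amp} through exactly the same linear maps ($\bm S$, the identity, and $\bm C$) as every $\bm W_i$, so transferring its mass into $\hat{\bm W}_1$ leaves the objective value and both budget expressions unchanged. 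I would then check the SINR constraints \eqref{SINR_orig}: that of IU $1$ is merely relaxed, its left-hand side increasing by $\left(1+1/\gamma_1\right){\rm tr}\left(\bm h_1\bm h_1^H\bm W_{\mathrm E}^*\right)\ge 0$, while for any other IU $k$ the extra interference ${\rm tr}\left(\bm h_k\bm h_k^H\bm W_{\mathrm E}^*\right)$ now carried by $\hat{\bm W}_1$ exactly cancels the removed energy-signal interference term, leaving \eqref{SINR_orig} intact. As $\hat{\bm W}_1\succeq\mathbf 0$, the new point is feasible, optimal, and satisfies $\hat{\bm W}_{\mathrm E}=\mathbf 0$.

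For the rank property I would discard the now-vanished $\bm W_{\mathrm E}$ and regard the residual problem as a \emph{separable} SDP in $\{\bm W_i\}$ with exactly $K_{\rm I}+2$ linear constraints: the $K_{\rm I}$ SINR constraints \eqref{SINR_orig} plus the AP-power and amplification constraints. Applying the rank-reduction result \cite[Theorem 3.2]{2010_Yongwei_Rank} already used in Remark \ref{rem1} yields an optimal solution with $\sum_{i\in\mathcal{K_I}}\left({\rm rank}(\bm W_i)\right)^2\le K_{\rm I}+2$. Since $\gamma_i>0$ and $\bar\sigma_i^2\ge\sigma_i^2>0$ force ${\rm tr}\left(\bm h_i\bm h_i^H\bm W_i\right)>0$, each $\bm W_i\neq\mathbf 0$ has rank at least one, so the left-hand side is at least $K_{\rm I}$; the remaining slack of $2$ cannot accommodate a rank-two block (which alone would contribute $4$ in place of $1$), so every $\bm W_i$ has rank exactly one. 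Together with the previous step this produces the asserted solution.

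I expect the genuine obstacle to be the first step rather than the counting: the classical proof fails solely because of \eqref{P1-SDR1_cons:amp}, and the decisive insight is that an absorption argument—unlike the dual-subspace argument—is insensitive to appending a further linear constraint, \emph{provided} that constraint treats $\bm W_{\mathrm E}$ and the $\bm W_i$'s through the same linear map, which \eqref{P1-SDR1_cons:amp} does via $\bm C$. A secondary point needing care is verifying that \cite[Theorem 3.2]{2010_Yongwei_Rank} indeed applies here—the SINR constraints couple the matrices but remain linear and separable across them—and that the constraint count is exactly $K_{\rm I}+2$, since any overcount would slacken the bound and fail to exclude a rank-two block. A KKT-based alternative—showing the dual matrix $\mu\mathbf I+\nu\bm C+\sum_{i}\lambda_i\bm h_i\bm h_i^H-\bm S$ (with $\mu,\nu,\lambda_i\ge 0$ the multipliers of the power, amplification, and SINR constraints) is positive definite—is conceivable but looks harder to close, because the $-\bm S$ term obstructs a clean positive-definiteness argument once the $\nu\bm C$ term is added.
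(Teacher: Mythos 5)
Your proposal is correct and follows essentially the same route as the paper's proof in Appendix~\ref{Appen_A}: the absorption step $\tilde{\bm W}_m = \hat{\bm W}_m + \hat{\bm W}_{\mathrm E}$ with the same case analysis of the SINR constraints (relaxation by the factor $1+1/\gamma_m$ for the absorbing IU, exact cancellation for the others), followed by the rank-counting argument via \cite[Theorem 3.2]{2010_Yongwei_Rank} with $K_{\rm I}+2$ constraints and the observation that $\gamma_i>0$ forces $\bm W_i^*\neq\mathbf 0$. No substantive differences to report.
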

\begin{proof}
	Please refer to Appendix \ref{Appen_A}.
\end{proof}	
\vspace{-3mm}
\begin{rem}\label{rem3}
	\vspace{-3mm}
	It is worth mentioning that (P1-SDR1) under a non-linear EH model \cite{2021_Shayan_SWIPT} still has an optimal solution $\{\{\bm W_i^*\}, \bm W_{\mathrm E}^*, \mathbf \Theta^*\} $ such that $\bm W_{\mathrm E}^* = \mathbf 0$. The detailed proof is similar to that in Appendix \ref{Appen_A} and is therefore omitted. However, whether ${\rm rank}\left(\bm W_i^*\right) = 1$, $\forall i\in\mathcal{K_I}$ holds when $\bm W_{\mathrm E}^* = \mathbf 0$ remains unknown and needs further investigation. 
\end{rem}
\vspace{-3mm}
\begin{rem}\label{rem4}
	\vspace{-3mm}
	If the IUs have the capability of cancelling the interference due to the energy signals, the term $-{\rm tr}\left(\bm h_i\bm h_i^H\bm W_{\mathrm E}\right)$ in all the constraints in \eqref{SINR_orig} should be removed. In this case, if $K_{\rm I} = 1$, the result presented in Theorem \ref{theo1} can be similarly proved to hold. However, if $K_{\rm I} > 1$, $\bm W_{\mathrm E}^* \neq \mathbf 0$ holds at least in some specific channel conditions. For example, if the effective channels of the IUs and the EUs satisfy \cite[Assumption 1]{2013_Jie_SWIPT_arxiv}, there is ${\rm rank}(\bm W_{\mathrm E}^*) \leq 1$ according to \cite[Proposition 3.2]{2013_Jie_SWIPT_arxiv}. 
\end{rem}
\vspace{-3mm}

Theorem \ref{theo1} extends the result in \cite{2020_Qingqing_SWIPT_letter} by showing that even with the additional amplification power constraint in \eqref{P1-SDR1_cons:amp} and the non-negligible IRS-amplified noise power in \eqref{P1-SDR1_obj} and \eqref{SINR_orig}, the SDR is tight for (P1) and transmitting dedicated energy beams is not needed for achieving the optimal value of (P1). The intuitive explanation of this result is that transmitting dedicated energy beams would increase the interference power at the IUs while consuming power at the AP and at the IRS, and thus this should be avoided. By applying Theorem \ref{theo1}, the AP precoding design can be greatly simplified (especially when $K_{\rm E}$ is large) and (P1) is reduced to 
%\fontsize{11pt}{\baselineskip}
\begin{subequations}\label{P1_simp}
	\begin{eqnarray}
	&\underset{\{ \boldsymbol w_i\}, \mathbf \Theta}{\max}& \sum_{j\in\mathcal {K_E}} \alpha_j  \sum_{i\in\mathcal {K_I}}\left|\left(\bm g_{r,j}^H\mathbf \Theta\mathbf F + \bm g_{d,j}^H\right) \bm w_i \right|^2  + \sum_{j\in\mathcal {K_E}}\alpha_j\sigma_z^2\left\|\bm g_{r,j}^H\mathbf \Theta\right\|^2  \\
	&\text{s.t.}& \frac{\left|\left(\bm h_{r,i}^H\mathbf \Theta\mathbf F + \bm h_{d,i}^H\right)  \bm w_i \right|^2 }{\sum_{k\in \mathcal {K_I}\backslash\{i\}}\left|\left(\bm h_{r,i}^H\mathbf \Theta\mathbf F + \bm h_{d,i}^H\right) \bm w_k \right|^2 + \sigma_z^2\left\|\bm h_{r,i}^H\mathbf \Theta\right\|^2 + \sigma_i^2} \geq \gamma_i, \forall i\in\mathcal {K_I}, \\
	&&  \sum_{i\in\mathcal {K_I}}\left\|\bm w_i \right\|^2 \leq P_{\text{A}},\\
	&& \sum_{i\in\mathcal {K_I}}\left\|\mathbf \Theta\mathbf F\bm w_i\right\|^2 + \sigma_z^2\left\|\mathbf \Theta\right\|_F^2 \leq P_{\text{I}}. 
	\end{eqnarray}
\end{subequations}
\normalsize Although the problem at hand is simplified, it is still non-convex and difficult to solve, which motivates the development of the following algorithm.

%\vspace{-13mm}
\subsection{Proposed Algorithm for Problem \eqref{P1_simp}}\label{subsec:P1_alg}
Before solving problem \eqref{P1_simp}, we first transform it into an equivalent but more tractable form. Similar to Section \ref{subsec_WPT}, we define $\bm u = \left[u_1, \cdots, u_N \right]^H$,  $\bar{\bm u} = \left[\bm u; 1\right] $, $\mathbf G_j = \left[ \text{diag}\left( \bm g_{r,j}^H\right) \mathbf F; \bm g_{d,j}^H \right] $, $\mathbf H_i \!=\! \left[ \text{diag}\left( \bm h_{r,i}^H\right)\mathbf F; \bm h_{d,i}^H \right] $, $\mathbf Z_j \!=\! \text{diag}\left(\left[ \bm g_{r,j}^H, 0\right] \right)\text{diag}\left(\left[ \bm g_{r,j}; 0\right] \right)$, $\mathbf T_i \!=\! \text{diag}\left(\left[ \bm h_{r,i}^H, 0\right]\right)\text{diag}\left(\left[ \bm h_{r,i}; 0\right] \right)$, $\mathbf Q_i = \left[\left(\mathbf F\bm w_i\bm w_i^H\mathbf F^H \right) \odot \mathbf I_N, \mathbf 0; \mathbf 0\right] \in \mathbb C^{(N+1)\times(N+1)} $, and $\mathbf P = \text{diag}\left(\left[\mathbf 1_{N\times 1}; 0\right] \right)$. Then, we have $\bm g_{r,j}^H\mathbf \Theta\mathbf F + \bm g_{d,j}^H = \bar{\bm u}^H\mathbf G_j$, $\bm h_{r,i}^H\mathbf \Theta\mathbf F + \bm h_{d,i}^H = \bar{\bm u}^H\mathbf H_i$, $\left\|\bm g_{r,j}^H\mathbf \Theta\right\|^2 = \bar{\bm u}^H \mathbf Z_j\bar{\bm u}$, $\left\|\bm h_{r,i}^H\mathbf \Theta\right\|^2  = \bar{\bm u}^H \mathbf T_i\bar{\bm u}$, $\left\|\mathbf \Theta\mathbf F\bm w_i\right\|^2  = \bm u^H\left(\left(\mathbf F\bm w_i\bm w_i^H\mathbf F^H \right) \odot \mathbf I_N \right) \bm u = \bar{\bm u}^H\mathbf Q_i\bar{\bm u} $, and $\left\|\mathbf \Theta\right\|_F^2 = \bar{\bm u}^H\mathbf P\bar{\bm u}$. Therefore, problem \eqref{P1_simp} can be equivalently converted to
%\fontsize{11.4pt}{\baselineskip}
\begin{subequations}\label{P1_simp_eqv}
	\setlength\abovedisplayskip{6pt}
	\setlength\belowdisplayskip{6pt}
	\begin{eqnarray}
	&\underset{\{ \boldsymbol w_i\}, \bar{\boldsymbol u}}{\max}& \sum_{j\in\mathcal {K_E}} \alpha_j  \sum_{i\in\mathcal {K_I}}\left|\bar{\bm u}^H\mathbf G_j\bm w_i \right|^2  + \sum_{j\in\mathcal {K_E}}\alpha_j\sigma_z^2\bar{\bm u}^H \mathbf Z_j\bar{\bm u}  \\
	&\text{s.t.}& \left|\bar{\bm u}^H\mathbf H_i\bm w_i \right|^2 \geq \gamma_i\left(\sum_{k\in \mathcal {K_I}\backslash\{i\}}\left|\bar{\bm u}^H\mathbf H_i\bm w_k \right|^2 + \sigma_z^2\bar{\bm u}^H \mathbf T_i\bar{\bm u} + \sigma_i^2\right), \forall i\in\mathcal {K_I}, \\
	&& \sum_{i\in\mathcal {K_I}}\left\|\bm w_i \right\|^2 \leq P_{\text{A}},\\
	&& \sum_{i\in\mathcal {K_I}}\bar{\bm u}^H\mathbf Q_i\bar{\bm u} + \sigma_z^2\bar{\bm u}^H\mathbf P\bar{\bm u} \leq P_{\text{I}}, \\
	&& [\bar{\bm u}]_{N+1} = 1. 
	\end{eqnarray}
\end{subequations}
\normalsize Besides the matrices $\{\bm W_i\}$ defined in the previous subsection, we define another matrix $\bm U$, which is positive semidefinite and satisfies ${\rm rank}(\bm U) \leq 1$. By utilizing the cyclic property of the trace operator and dropping the rank constraints on $\{\bm W_i\}$ and $\bm U$, the SDR reformulation of problem \eqref{P1_simp_eqv} is given by
%\fontsize{11.2pt}{\baselineskip}
\begin{subequations}\label{P1_simp_SDR}
	\setlength\abovedisplayskip{6pt}
	\setlength\belowdisplayskip{6pt}
	%\vspace{-2mm}
	\begin{align}
	&\hspace{-7mm}\underset{\substack{\{\boldsymbol W_i\in\mathbb H^M\},\\\boldsymbol U\in\mathbb H^{N+1}}}{\max} \  \sum_{j\in\mathcal {K_E}} \alpha_j \sum_{i\in\mathcal {K_I}}{\rm tr}\left(\mathbf G_j\bm W_i\mathbf G_j^H\bm U \right) + \sum_{j\in\mathcal {K_E}} \alpha_j \sigma_z^2{\rm tr}\left( \mathbf Z_j \bm U\right) \\
	\text{s.t.}\ & {\rm tr}\left(\mathbf H_i\bm W_i\mathbf H_i^H\bm U \right) \geq \gamma_i \left( \sum_{k\in \mathcal {K_I}\backslash\{i\}}{\rm tr}\left(\mathbf H_i\bm W_k\mathbf H_i^H\bm U \right) + \sigma_z^2{\rm tr}\left(\mathbf T_i\bm U \right) + \sigma_i^2\right),  \forall i\in\mathcal {K_I}, \label{P1_simp_SDR_cons:SINR}\\
	& \sum_{i\in\mathcal {K_I}}{\rm tr}\left(\bm W_i\right)  \leq P_{\text{A}},\label{P1_simp_SDR_cons:trans}\\
	& \sum_{i\in\mathcal {K_I}}{\rm tr}\left(\mathbf Q_i\bm U \right) + \sigma_z^2{\rm tr}\left(\mathbf P\bm U \right) \leq P_{\text I}, \\
	& [\bm U]_{N+1,N+1} = 1, \\
	& \bm W_i \succeq \mathbf 0, \forall  i\in\mathcal {K_I}, \ \bm U \succeq \mathbf 0.
	\end{align}
\end{subequations}
\normalsize However, problem \eqref{P1_simp_SDR} is still non-convex. Nevertheless, we note that by fixing either $\{\bm W_i\}$ or $\bm U$, problem \eqref{P1_simp_SDR} is reduced to a standard convex SDP that can be optimally solved by existing convex optimization solvers, e.g., CVX \cite{2004_S.Boyd_cvx}. Thus, this motivates us to utilize the AO method as in Section \ref{subsec_WPT} to solve problem \eqref{P1_simp_SDR} by iteratively optimizing $\{\bm W_i\}$ and $\bm U$ until convergence is achieved. \looseness=-1 

\emph{Convergence and Complexity Analysis:} The subproblem for updating  $\{\bm W_i\}$ or $\bm U$ is optimally solved in each iteration, and thus the objective value of problem \eqref{P1_simp_SDR} is non-decreasing over the iterations. This, together with the fact that the optimal value of problem \eqref{P1_simp_SDR} is bounded from above, guarantees the convergence of the proposed algorithm. After the convergence of the AO algorithm, if the solution is not rank-one, we can obtain a rank-one solution for $\{\bm W_i\}$ by applying the rank reduction techniques in \cite{2010_Yongwei_Rank}, and we can construct a rank-one solution for $\bm U$ by utilizing the Gaussian randomization method as in \cite{2018_Qingqing_Joint_conf}. It is shown by simulation results in Section \ref{Section_simulation} that the objective value of problem \eqref{P1_simp_SDR} achieved by the solution constructed utilizing the Gaussian randomization method is almost the same as that when the AO algorithm converges. Regarding the computational complexity of the proposed algorithm, it is dominated by solving the two SDP subproblems. Similar to the analysis in Section \ref{subsec_WPT}, the total computational complexity of the proposed algorithm is of the order of \cite{2010_Imre_SDR_complexity}
%\fontsize{11.4pt}{\baselineskip}
\begin{align}
\mathcal O\left[\mathcal L\log\left(\frac{1}{\varepsilon}\right) \big( \left(M^{3.5}+N^{3.5} \right)K_{\rm I} + \left(M^{2.5}+N^{2.5} \right)K_{\rm I}^2 + \left(M^{0.5}+N^{0.5} \right)K_{\rm I}^3 \big)\right],
\end{align}
\normalsize where $\varepsilon$ is the solution accuracy and $\mathcal L$ denotes the number of iterations needed for convergence.  

\vspace{-2.5mm}
\section{Proposed Solution to Problem (P2)}\label{Section_solution to (P2)}
In this section, we aim to solve (P2). First, we explore whether dedicated energy beams can be reasonably removed or set to zero when solving (P2). To facilitate the analysis, we start by introducing slack variables $\{\rho_i,\tau_i\}$, $i\in\mathcal{K_I}$ such that 
\begin{align}
e^{\rho_i} & = \sum_{k\in \mathcal {K_I}}\left|\bm h_i^H \bm w_k \right|^2 + \sum_{j\in \mathcal {K_E}}\left|\bm h_i^H \bm v_j \right|^2 + \bar \sigma_i^2, \forall i\in\mathcal{K_I}, \label{slack_equ_1}\\
e^{\tau_i} & = \sum_{k\in \mathcal {K_I}\backslash\{i\}}\left|\bm h_i^H \bm w_k \right|^2 + \sum_{j\in \mathcal {K_E}}\left|\bm h_i^H \bm v_j \right|^2 + \bar \sigma_i^2, \forall i\in\mathcal{K_I}, \label{slack_equ_2} 
\end{align}
where $\bar \sigma_i^2 \triangleq \sigma_z^2\left\|\mathbf h_{r,i}^H\mathbf \Theta\right\|^2 + \sigma_i^2$, $\forall i\in\mathcal{K_I}$. Then, the objective function of (P2) can be equivalently written as $\sum_{i\in\mathcal {K_I}}\mu_i \log_2e^{\left(\rho_i - \tau_i\right)}$. Consequently, (P2) can be reformulated as follows
\begin{subequations}\label{P2_Eqv} 
	\setlength\abovedisplayskip{5pt}
	\setlength\belowdisplayskip{5pt}
	\begin{eqnarray}
	\hspace{-1.5cm}\text{(P2-Eqv)}: &\underset{\substack{\left\lbrace \boldsymbol w_i\right\rbrace, \left\lbrace \boldsymbol v_j\right\rbrace, \\\mathbf \Theta, \left\lbrace \rho_i, \tau_i\right\rbrace}}{\max}& \sum_{i\in\mathcal {K_I}}\mu_i\log_2e^{ \left(\rho_i - \tau_i\right)}  \\	
	&\text{s.t.}& \sum_{k\in \mathcal {K_I}}\left|\bm h_i^H \bm w_k \right|^2 + \sum_{j\in \mathcal {K_E}}\left|\bm h_i^H \bm v_j \right|^2 + \bar \sigma_i^2 \geq e^{\rho_i}, \forall i\in\mathcal{K_I}, \label{P2_Eqv_cons:slack1}\\
	&& \sum_{k\in \mathcal {K_I}\backslash\{i\}}\left|\bm h_i^H \bm w_k \right|^2 + \sum_{j\in \mathcal {K_E}}\left|\bm h_i^H \bm v_j \right|^2 + \bar \sigma_i^2 \leq e^{\tau_i}, \forall i\in\mathcal{K_I}, \label{P2_Eqv_cons:slack2}\\
	&& \eqref{P2_cons:E}, \eqref{cons_PA}, \eqref{cons_PI}. 
	\end{eqnarray}
\end{subequations}
\normalsize Note that the constraints in \eqref{P2_Eqv_cons:slack1} and \eqref{P2_Eqv_cons:slack2} are obtained by replacing the equality signs in \eqref{slack_equ_1} and \eqref{slack_equ_2} with inequality signs. At the optimal solution to (P2-Eqv), the constraints in \eqref{P2_Eqv_cons:slack1} and \eqref{P2_Eqv_cons:slack2} must be
active, since otherwise the objective value can be further improved by increasing $\rho_i$ (decreasing $\tau_i$). Thus, (P2-Eqv) is equivalent to (P2). Recall that we defined $\bm W_i = \bm w_i\bm w_i^H$, $\forall i \in \mathcal{K_I}$, $\bm W_{\mathrm E} = \sum_{j\in \mathcal {K_E}}\bm v_j\bm v_j^H$, $\bm C = \mathbf F^H\bm \Theta^H \bm \Theta\mathbf F$, and $\bar P_{\text{I}} = P_{\text{I}} - \sigma_z^2\left\|\mathbf \Theta\right\|_F^2$ in the previous section. Then, the SDR reformulation of (P2-Eqv) can be expressed as
%\fontsize{11.5pt}{\baselineskip}
\begin{subequations}\label{P2_Eqv_SDR1}
	\begin{eqnarray}
    && \hspace{-3.3cm}\text{(P2-Eqv-SDR1)}: 
    \underset{\substack{\left\lbrace \boldsymbol W_i\in\mathbb H^M\right\rbrace, \\\boldsymbol W_{\mathrm E}\in\mathbb H^M, \mathbf \Theta, \left\lbrace \rho_i, \tau_i\right\rbrace}}{\max} \hspace{3mm} \sum_{i\in\mathcal {K_I}}\mu_i \log_2e^{\left(\rho_i - \tau_i\right)}  \\	
	&\text{s.t.}& \sum_{k\in\mathcal {K_I}}{\rm tr}\left(\bm h_i\bm h_i^H\bm W_k \right) + {\rm tr}\left(\bm h_i\bm h_i^H\bm W_{\mathrm E} \right) + \bar \sigma_i^2 \geq e^{\rho_i}, \forall i\in\mathcal{K_I}, \label{P2_Eqv_SDR1_cons:slack1}\\
	&& \sum_{k\in \mathcal {K_I}\backslash\{i\}}{\rm tr}\left(\bm h_i\bm h_i^H\bm W_k \right) + {\rm tr}\left(\bm h_i\bm h_i^H\bm W_{\mathrm E} \right) + \bar \sigma_i^2 \leq e^{\tau_i}, \forall i\in\mathcal{K_I}, \label{P2_Eqv_SDR1_cons:slack2}\\
    && \sum_{i\in\mathcal {K_I}}{\rm tr}\left(\bm g_j\bm g_j^H\bm W_i \right) + {\rm tr}\left(\bm g_j\bm g_j^H\bm W_{\mathrm E}\right) \geq \bar E_j, \forall j\in\mathcal{K_E}, \label{P2_Eqv_SDR1_cons:E}\\
    &&  \sum_{i\in\mathcal {K_I}}{\rm tr}\left(\bm W_i \right) + {\rm tr}\left(\bm W_{\mathrm E}\right) \leq P_{\text{A}}, \label{P2_Eqv_SDR1_cons:trans}\\
    &&  \sum_{i\in\mathcal {K_I}}{\rm tr}\left(\bm C\bm W_i \right) + {\rm tr}\left(\bm C\bm W_{\mathrm E}\right) \leq \bar P_{\text{I}}, \label{P2_Eqv_SDR1_cons:amp}\\
    &&  \bm W_i \succeq \bm 0, \forall i\in\mathcal {K_I}, \; \bm W_{\mathrm E} \succeq 0, \label{P2_Eqv_SDR1_cons:sedmi}
	\end{eqnarray}
\end{subequations}
\normalsize where $\bar E_j \triangleq E_j - \sigma_z^2\left\|\bm g_{r,j}^H\mathbf \Theta\right\|^2$, $\forall j\in\mathcal{K_E}$. Inspired by Theorem \ref{theo1}, we have the following theorem.
%\vspace{-1mm}
\begin{theo}\label{theo2}
	\vspace{-3mm}
	\emph{Assuming that (P2-Eqv-SDR1) is feasible for $P_{\rm A} > 0$, $P_{\rm I} > 0$, and $E_j\geq0$, $\forall j\in\mathcal{K_E}$, then it always has an optimal solution $\{\{\bm W_i^*\}, \bm W_{\mathrm E}^*, \mathbf \Theta^*, \{\rho_i^*, \tau_i^*\}\} $ such that $\bm W_{\mathrm E}^* = \mathbf 0$ and ${\rm rank}\left( \bm W_i^*\right) \leq 1, \forall i\in\mathcal{K_I'}$, where $\mathcal{K_I'} \subseteq \mathcal{K_I}$ and $\left|\mathcal{K_I'}\right| \geq K_{\rm I} - 1$. }
	\vspace{-3mm}
\end{theo}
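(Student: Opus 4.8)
The plan is to fix the reflection matrix at an optimal value $\mathbf\Theta^*$ and establish the two assertions separately: that some optimum uses no dedicated energy beam, and that among such optima at most one information beam need exceed rank one. Since the objective of (P2-Eqv-SDR1) depends on $\{\bm W_i\}$ and $\bm W_{\mathrm E}$ only through the slack variables $\{\rho_i,\tau_i\}$, the claim $\bm W_{\mathrm E}^*=\mathbf 0$ can be obtained by a simple absorption argument that needs no duality: given any optimum, replace $\bm W_1$ by $\bm W_1+\bm W_{\mathrm E}$ and set $\bm W_{\mathrm E}=\mathbf 0$, holding $\mathbf\Theta^*$ and $\{\rho_i^*,\tau_i^*\}$ fixed. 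Because each of \eqref{P2_Eqv_SDR1_cons:slack1}, \eqref{P2_Eqv_SDR1_cons:E}, \eqref{P2_Eqv_SDR1_cons:trans} and \eqref{P2_Eqv_SDR1_cons:amp} couples $\{\bm W_i\}$ and $\bm W_{\mathrm E}$ through a single common sum, their left-hand sides are unchanged, while the interference terms in \eqref{P2_Eqv_SDR1_cons:slack2} do not increase; positive semidefiniteness is preserved since $\bm W_1+\bm W_{\mathrm E}\succeq\mathbf 0$. The modified point is thus feasible with the same objective, hence optimal, with $\bm W_{\mathrm E}^*=\mathbf 0$.

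For the rank assertion I would invoke the KKT conditions of the $\mathbf\Theta$-fixed problem, which remain necessary at a global optimum under a standard constraint qualification. Let $a_i,b_i,c_j,\lambda,\nu\ge0$ denote the multipliers of \eqref{P2_Eqv_SDR1_cons:slack1}, \eqref{P2_Eqv_SDR1_cons:slack2}, \eqref{P2_Eqv_SDR1_cons:E}, \eqref{P2_Eqv_SDR1_cons:trans}, \eqref{P2_Eqv_SDR1_cons:amp}, and let $\bm Y_{\mathrm E}\succeq\mathbf 0$ and $\bm Y_i\succeq\mathbf 0$ be the multipliers of $\bm W_{\mathrm E}\succeq\mathbf 0$ and $\bm W_i\succeq\mathbf 0$. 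Stationarity in $\rho_i$ and $\tau_i$ gives $a_i=\mu_i(\log_2 e)e^{-\rho_i}>0$ and $b_i=\mu_i(\log_2 e)e^{-\tau_i}>0$, so both slack constraints are active and $e^{\rho_i}\ge e^{\tau_i}$, whence $b_i\ge a_i$. Stationarity in $\bm W_{\mathrm E}$ and $\bm W_i$ yields $\bm Y_{\mathrm E}=\lambda\mathbf I+\nu\bm C+\sum_{k\in\mathcal{K_I}}(b_k-a_k)\bm h_k\bm h_k^H-\sum_{j\in\mathcal{K_E}}c_j\bm g_j\bm g_j^H$ and $\bm Y_i=\bm Y_{\mathrm E}-b_i\bm h_i\bm h_i^H$; in particular $\bm Y_{\mathrm E}=\bm Y_i+b_i\bm h_i\bm h_i^H\succeq\mathbf 0$, and complementary slackness forces ${\rm range}(\bm W_{\mathrm E})\subseteq\mathcal N$ with $\mathcal N\triangleq{\rm null}(\bm Y_{\mathrm E})$.

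The structural core is to bound ${\rm rank}(\bm W_i)\le\dim{\rm null}(\bm Y_i)$ using $\bm Y_i=\bm Y_{\mathrm E}-b_i\bm h_i\bm h_i^H\succeq\mathbf 0$ with $b_i>0$. Testing this inequality on the projection of $\bm h_i$ onto $\mathcal N$ forces that projection to vanish, so every $\bm h_i$ lies in ${\rm range}(\bm Y_{\mathrm E})$, i.e. $\bm h_i\perp\mathcal N$; a rank-one-perturbation analysis then gives ${\rm null}(\bm Y_i)=\mathcal N\oplus{\rm span}\{\bm Y_{\mathrm E}^{\dagger}\bm h_i\}$, so each $\bm W_i$ is supported on the common subspace $\mathcal N$ plus a single private direction fixed by $\bm h_i$. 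Since every $\bm h_k\perp\mathcal N$, the $\mathcal N$-components of the beams are invisible to all SINR terms in \eqref{P2_Eqv_SDR1_cons:slack1}--\eqref{P2_Eqv_SDR1_cons:slack2} and affect only the energy-harvesting, transmit-power and amplification quantities, which enter through sums over $i$. I would therefore consolidate all $\mathcal N$-content (from $\bm W_{\mathrm E}$ and from every $\bm W_i$) into the single beam $\bm W_1$: this preserves every SINR, every $\bar E_j$, the total transmit power and the amplification power, drives $\bm W_{\mathrm E}^*$ to $\mathbf 0$, makes $\bm W_i$ rank-one for all $i\neq1$, and leaves only $\bm W_1$ possibly of higher rank, which is exactly $|\mathcal{K_I'}|\ge K_{\rm I}-1$.

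The hard part is precisely this consolidation when $\mathcal N\neq\{\mathbf 0\}$. Unlike Theorem \ref{theo1}, where the EU channels enter the \emph{objective} through the positive semidefinite matrix $\bm S$ and push the governing matrix toward positive definiteness, here the energy-harvesting constraints contribute the negative term $-\sum_j c_j\bm g_j\bm g_j^H$ to $\bm Y_{\mathrm E}$, so $\bm Y_{\mathrm E}$ may be singular and the beams may genuinely need a nontrivial shared subspace to meet the $\bar E_j$ targets. The delicate step is to repack this shared $\mathcal N$-energy, together with the cross terms between $\mathcal N$ and the private directions, into one positive semidefinite beam $\bm W_1$ that reproduces the required values of $\bm g_j^H\bm W_1\bm g_j$ (and respects the power limits) while every other beam collapses to rank one. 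It is exactly this requirement—that a single beam absorb the entire shared subspace—that yields the ``all but one'' conclusion rather than the stronger all-rank-one statement of Theorem \ref{theo1}.
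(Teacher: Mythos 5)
Your absorption argument for $\bm W_{\mathrm E}^* = \mathbf 0$ is correct and coincides with the paper's first step (Appendix \ref{Appen_B} reduces it to the argument of Appendix \ref{Appen_A}). The genuine gap is in the rank part. You set up the KKT system, identify the shared null space $\mathcal N$ of $\bm Y_{\mathrm E}$, and then state that the ``delicate step'' is to repack the shared $\mathcal N$-energy and the cross terms into a single positive semidefinite beam while reproducing the values of $\bm g_j^H\bm W_1\bm g_j$ --- but you never carry out that step. Since the $\bm g_j$ are in general not orthogonal to $\mathcal N$, the cross terms between $\mathcal N$ and each private direction genuinely contribute to the EH constraints, and stripping them from $\bm W_i$ and dumping them into $\bm W_1$ does not obviously keep $\bm W_1\succeq\mathbf 0$ or preserve feasibility. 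As written, the proof stops exactly at the point you yourself flag as the hard part. A secondary concern is that the $\mathbf\Theta$-fixed problem is non-convex because of \eqref{P2_Eqv_SDR1_cons:slack2}, so the necessity of your KKT system at the global optimum (and hence the existence of the multipliers with the stated signs) requires additional justification.

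The paper closes this gap without any duality, by a direct primal construction. Given an arbitrary optimal $\{\check{\bm W}_i\}$ of the reduced problem, it sets $\bm W_i^* = \check{\bm W}_i\bm h_i\big(\bm h_i^H\check{\bm W}_i\bm h_i\big)^{-1}\bm h_i^H\check{\bm W}_i$ for $i\neq m$ and $\bm W_m^* = \sum_{i}\check{\bm W}_i - \sum_{i\neq m}\bm W_i^*$, as in \eqref{constru_2}--\eqref{constru_3}. Three elementary facts finish the proof: (i) $\bm h_i^H\bm W_i^*\bm h_i = \bm h_i^H\check{\bm W}_i\bm h_i$, so each user's own signal term is unchanged and \eqref{P2_Eqv_SDR2_cons:slack2} for $i\neq m$ is preserved; (ii) $\check{\bm W}_i-\bm W_i^*\succeq\mathbf 0$ by Cauchy--Schwarz (see \eqref{minus_nonnega}), so the interference at user $m$ does not increase and $\bm W_m^*\succeq\mathbf 0$; (iii) $\sum_i\bm W_i^* = \sum_i\check{\bm W}_i$ exactly (see \eqref{tr_equ}), so every constraint that depends only on the sum --- the EH constraints, the transmit and amplification power budgets, and \eqref{P2_Eqv_SDR2_cons:slack1} --- holds verbatim. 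This is precisely the consolidation you were seeking, and it sidesteps the cross-term difficulty entirely because the whole positive semidefinite remainder $\check{\bm W}_i-\bm W_i^*$ is transferred to $\bm W_m^*$ rather than being decomposed along $\mathcal N$. If you wish to salvage your route, you would need to prove that such a repacking exists in general; the paper's construction shows the cleaner path is to avoid the dual analysis altogether.
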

\begin{proof}
	Please refer to Appendix \ref{Appen_B}.
\end{proof}	
\vspace{-3mm}
\vspace{-3mm}
\begin{rem}\label{rem5}
	For the case considering a non-linear EH model \cite{2021_Shayan_SWIPT}, the EH constraints in \eqref{P2_Eqv_SDR1_cons:E} can be replaced by
	\begin{align}
	\sum_{i\in\mathcal {K_I}}{\rm tr}\left(\bm g_j\bm g_j^H\bm W_i \right) + {\rm tr}\left(\bm g_j\bm g_j^H\bm W_{\mathrm E}\right) \geq P_j(E_j) - \sigma_z^2\left\|\bm g_{r,j}^H\mathbf \Theta\right\|^2, \forall j\in\mathcal{K_E},
	\end{align}
	where $P_j(E_j)$ is a function of $E_j$, whose value is known and fixed for a given $E_j$. It is not difficult to see that, by replacing the term $\bar E_j$ by $P_j(E_j) - \sigma_z^2\left\|\bm g_{r,j}^H\mathbf \Theta\right\|^2$ in Appendix \ref{Appen_B}, we can immediately obtain the same result presented in Theorem \ref{theo2} for the case adopting a non-linear EH model.
\end{rem}
\vspace{-3mm}
\begin{rem}\label{rem6}
	\vspace{-3mm}
	If the interference caused by the energy signals can be cancelled by the IUs, the term ${\rm tr}\left(\bm h_i\bm h_i^H\bm W_{\mathrm E}\right)$ in all the constraints in \eqref{P2_Eqv_SDR1_cons:slack1} and \eqref{P2_Eqv_SDR1_cons:slack2} should be removed. In this case, if $K_{\rm I} = 1$, the result revealed in Theorem \ref{theo2} can be similarly proved to hold. However, if $K_{\rm I} > 1$, the result may be different. Specifically, it is evident that (P2-Eqv-SDR1) without the term ${\rm tr}\left(\bm h_i\bm h_i^H\bm W_{\mathrm E}\right)$ always yields an equal or larger objective value than when it has the term ${\rm tr}\left(\bm h_i\bm h_i^H\bm W_{\mathrm E}\right)$. Meanwhile, (P2-Eqv-SDR1) with the term ${\rm tr}\left(\bm h_i\bm h_i^H\bm W_{\mathrm E}\right)$ does not need $\bm W_{\mathrm E}$ according to Theorem \ref{theo2}. Based on these facts, it can be concluded that a non-zero $\bm W_{\mathrm E}$ may help enhance the performance of  (P2-Eqv-SDR1) without the term ${\rm tr}\left(\bm h_i\bm h_i^H\bm W_{\mathrm E}\right)$. 
\end{rem}
\vspace{-3mm}
Since it is difficult to check if the SDR is tight for (P2-Eqv), it may not be concluded from Theorem \ref{theo2} that dedicated energy signals are not needed for achieving the optimality of (P2-Eqv). Nevertheless, Theorem \ref{theo2} indicates that the AP precoding design in (P2-Eqv-SDR1) can be simplified by setting $\bm W_{\mathrm E} = \mathbf 0$ without any loss of optimality. Additionally, if the obtained optimal solution to (P2-Eqv-SDR1) does not satisfy the rank-one constraints on $\{\bm W_i\}$, we can always construct an alternative optimal solution that fulfills the condition ${\rm rank}\left( \bm W_i\right) \leq 1$ for no less than $\left( K_{\rm I}-1\right) $ $i\in\mathcal{K_I}$, as presented in Appendix \ref{Appen_B}. These considerations motivate us to focus on solving a simplified version of (P2-Eqv-SDR1), denoted by (P2-Eqv-SDR2) that is obtained by setting $\bm W_{\mathrm E} = \mathbf 0$, instead of (P2-Eqv). To tackle (P2-Eqv-SDR2), we apply the AO method to decompose the problem into two subproblems which are then solved alternatingly until convergence is reached. The details are provided in the next subsections. 

%\vspace{-3mm}
\subsection{Transmit Beamforming Optimization}\label{P2_trans_opt}
For any given $\mathbf \Theta$, the subproblem for optimizing $\{\bm W_i\}$ can be written as 
%\fontsize{11.5pt}{\baselineskip}
\begin{subequations}\label{P2_Eqv_SDR1_sub1}
	\begin{eqnarray}
	&\underset{\left\lbrace \boldsymbol W_i\in\mathbb H^M\right\rbrace, \left\lbrace \rho_i, \tau_i\right\rbrace}{\max}& \sum_{i\in\mathcal {K_I}}\mu_i \log_2e^{\left(\rho_i - \tau_i\right)}  \label{P2_Eqv_SDR1_sub1_obj}\\	
	&\text{s.t.}& \sum_{k\in\mathcal {K_I}}{\rm tr}\left(\bm h_i\bm h_i^H\bm W_k \right) + \bar \sigma_i^2 \geq e^{\rho_i}, \forall i\in\mathcal{K_I}, \label{P2_Eqv_SDR1_sub1_cons:slack1}\\
	&& \sum_{k\in \mathcal {K_I}\backslash\{i\}}{\rm tr}\left(\bm h_i\bm h_i^H\bm W_k \right) + \bar \sigma_i^2 \leq e^{\tau_i}, \forall i\in\mathcal{K_I}, \label{P2_Eqv_SDR1_sub1_cons:slack2}\\
	&& \sum_{i\in\mathcal {K_I}}{\rm tr}\left(\bm g_j\bm g_j^H\bm W_i \right) \geq \bar E_j, \forall j\in\mathcal{K_E}, \label{P2_Eqv_SDR1_sub1_cons:E}\\
	&&  \sum_{i\in\mathcal {K_I}}{\rm tr}\left(\bm W_i \right) \leq P_{\text{A}}, \label{P2_Eqv_SDR1_sub1_cons:trans}\\
	&&  \sum_{i\in\mathcal {K_I}}{\rm tr}\left(\bm C\bm W_i \right) \leq \bar P_{\text{I}},\label{P2_Eqv_SDR1_sub1_cons:amp}\\
	&&  \bm W_i \succeq \bm 0, \forall i\in\mathcal {K_I}. \label{P2_Eqv_SDR1_sub1_cons:posi}
	\end{eqnarray}
\end{subequations}
\normalsize However, constraint \eqref{P2_Eqv_SDR1_sub1_cons:slack2} is non-convex since the right-hand-side (RHS) is convex with respect to $\tau_i$, leading to the non-convexity of problem \eqref{P2_Eqv_SDR1_sub1}. To tackle this issue, the SCA technique is leveraged as in Section \ref{subsec_WPT}. Specifically, we can replace the convex term $e^{\tau_i}$ in \eqref{P2_Eqv_SDR1_sub1_cons:slack2} with its first-order Taylor expansion at a given local feasible point $\tau_i^{(t)}$. Then, the LHS of constraint \eqref{P2_Eqv_SDR1_sub1_cons:slack2} is upper-bounded by
\begin{align}
\sum_{k\in \mathcal {K_I}\backslash\{i\}}{\rm tr}\left(\bm h_i\bm h_i^H\bm W_k \right) + \bar \sigma_i^2 \leq e^{\tau_i^{(t)}}\left(\tau_i - \tau_i^{(t)} + 1 \right), \forall i\in\mathcal{K_I}. \label{P2_Eqv_SDR1_cons:slack2_sca}
\end{align}
By replacing \eqref{P2_Eqv_SDR1_sub1_cons:slack2} with \eqref{P2_Eqv_SDR1_cons:slack2_sca}, a suboptimal solution to problem \eqref{P2_Eqv_SDR1_sub1} can be obtained by solving the following problem
\begin{subequations}\label{P2_Eqv_SDR1_sub1_SCA}
	\begin{eqnarray}
	&\underset{\left\lbrace \boldsymbol W_i\right\rbrace, \left\lbrace \rho_i, \tau_i\right\rbrace}{\max}& \sum_{i\in\mathcal {K_I}}\mu_i \log_2e^{\left(\rho_i - \tau_i\right)}  \\	
	&\text{s.t.}& \eqref{P2_Eqv_SDR1_sub1_cons:slack1}, \eqref{P2_Eqv_SDR1_cons:slack2_sca}, \eqref{P2_Eqv_SDR1_sub1_cons:E} - \eqref{P2_Eqv_SDR1_sub1_cons:posi}. 
	\end{eqnarray}
\end{subequations}
By direct inspection, problem \eqref{P2_Eqv_SDR1_sub1_SCA} is a convex SDP and hence it can be optimally solved by using existing convex optimization solvers, e.g., CVX \cite{2004_S.Boyd_cvx}. 

It is worth noting that, although problem \eqref{P2_Eqv_SDR1_sub1_SCA} belongs to the class of separable SDPs and has an optimal solution that satisfies the condition $\sum_{i\in\mathcal {K_I}}\left( {\rm rank}(\bm W_i)\right)^2 \leq 2K_{\rm I} + K_{\rm E} + 2$ according to \cite[Theorem 3.2]{2010_Yongwei_Rank}, this is not sufficient to prove that ${\rm rank}(\bm W_i) \leq 1$, $\forall i\in\mathcal{K_I}$ due to the arbitrariness of $K_{\rm I}$ and $K_{\rm E}$. In addition, by using the duality principle and the Karush-Kuhn-Tucker (KKT) conditions, the optimal solution to problem \eqref{P2_Eqv_SDR1_sub1_SCA} can be proved to satisfy ${\rm rank}(\bm W_i) \leq 1$, $\forall i\in\mathcal{K_I}$ provided that constraint \eqref{P2_Eqv_SDR1_sub1_cons:trans} is active at the optimal solution, the proof of which is similar to that of \cite[Theorem 1]{2020_Dongfang_rank1} and it is thus skipped for brevity. Unfortunately, constraint \eqref{P2_Eqv_SDR1_sub1_cons:trans} is not necessarily active due to the existence of constraint \eqref{P2_Eqv_SDR1_sub1_cons:amp}. 

\vspace{-6mm}
\subsection{Reflect Beamforming Optimization}\label{P2_ref_opt}
For any given $\{\bm W_i\}$, $\mathbf \Theta$ can be optimized by solving the following subproblem 
%\fontsize{11.5pt}{\baselineskip} 
\begin{subequations}\label{P2_Eqv_SDR1_sub2}
	\begin{eqnarray}
	&\underset{\mathbf \Theta,\left\lbrace \rho_i, \tau_i\right\rbrace}{\max}& \sum_{i\in\mathcal {K_I}}\mu_i\log_2e^{\left(\rho_i - \tau_i\right)} \\
	&\text{s.t.}& \sum_{k\in\mathcal {K_I}}\bm h_i^H\bm W_k\bm h_i + \sigma_z^2\left\|\bm h_{r,i}^H\mathbf \Theta\right\|^2 + \sigma_i^2 \geq e^{\rho_i}, \forall i\in\mathcal{K_I}, \\
	&& \sum_{k\in \mathcal {K_I}\backslash\{i\}}\bm h_i^H\bm W_k\bm h_i + \sigma_z^2\left\|\bm h_{r,i}^H\mathbf \Theta\right\|^2 + \sigma_i^2 \leq e^{\tau_i}, \forall i\in\mathcal{K_I}, \\
	&& \sum_{i\in\mathcal {K_I}}\bm g_j^H\bm W_i\bm g_j + \sigma_z^2\left\|\bm g_{r,j}^H\mathbf \Theta\right\|^2 \geq E_j, \forall j\in\mathcal{K_E}, \\
	&& \sum_{i\in\mathcal {K_I}}{\rm tr}\left(\mathbf \Theta\mathbf F \bm W_i \mathbf F^H\mathbf \Theta^H \right)  + \sigma_z^2\left\|\mathbf \Theta\right\|_F^2 \leq P_{\text{I}}. 
	\end{eqnarray}
\end{subequations}
\normalsize To facilitate the solution of problem \eqref{P2_Eqv_SDR1_sub2}, we first transform it into a more tractable form. As in Section \ref{subsec:P1_alg}, we define $\bm u = \left[u_1, \cdots, u_N \right]^H$, $\bar{\bm u} = \left[\bm u; 1\right] $, $\mathbf H_i = \left[ \text{diag}\left( \bm h_{r,i}^H\right)\mathbf F; \bm h_{d,i}^H \right] $, $\mathbf G_j = \left[ \text{diag}\left( \bm g_{r,j}^H\right) \mathbf F; \bm g_{d,j}^H \right] $, $\mathbf T_i = \text{diag}\left(\left[ \bm h_{r,i}^H, 0\right]\right)\text{diag}\left(\left[ \bm h_{r,i}; 0\right] \right)$, $\mathbf Z_j = \text{diag}\left(\left[ \bm g_{r,j}^H, 0\right] \right)\text{diag}\left(\left[ \bm g_{r,j}; 0\right] \right)$, $\mathbf Q_i = \left[\left(\mathbf F\bm W_i\mathbf F^H \right) \odot \mathbf I_N, \mathbf 0; \mathbf 0\right] \in \mathbb C^{(N+1)\times(N+1)} $, and $\mathbf P = \text{diag}\left(\left[\mathbf 1_{N\times 1}; 0\right] \right)$. Then, we have $\bm h_i^H = \bar{\bm u}^H\mathbf H_i$, $\bm g_j^H = \bar{\bm u}^H\mathbf G_j$, $\left\|\bm h_{r,i}^H\mathbf \Theta\right\|^2  = \bar{\bm u}^H \mathbf T_i\bar{\bm u}$, $\left\|\bm g_{r,j}^H\mathbf \Theta\right\|^2 = \bar{\bm u}^H \mathbf Z_j\bar{\bm u}$, ${\rm tr}\left(\mathbf \Theta\mathbf F \bm W_i \mathbf F^H\mathbf \Theta^H \right)  = \bm u^H\left(\left(\mathbf F\bm W_i\mathbf F^H \right) \odot \mathbf I_N \right) \bm u = \bar{\bm u}^H\mathbf Q_i\bar{\bm u} $, and $\left\|\mathbf \Theta\right\|_F^2 = \bar{\bm u}^H\mathbf P\bar{\bm u}$. Accordingly, problem \eqref{P2_Eqv_SDR1_sub2} can be equivalently written as
\begin{subequations}\label{P2_Eqv_SDR1_sub2_eqv}
	\begin{eqnarray}
	&\underset{\bar{\boldsymbol u}, \left\lbrace \rho_i, \tau_i\right\rbrace}{\max}& \sum_{i\in\mathcal {K_I}}\mu_i \log_2e^{\left(\rho_i - \tau_i\right)}  \\	
	&\text{s.t.}& \sum_{k\in \mathcal {K_I}}\bar{\bm u}^H \mathbf H_i\bm W_k\mathbf H_i^H\bar{\bm u} + \sigma_z^2\bar{\bm u}^H \mathbf T_i\bar{\bm u} + \sigma_i^2 \geq e^{\rho_i}, \forall i\in\mathcal{K_I}, \label{P2_Eqv_SDR1_sub2_eqv_cons:slack1}\\
	&& \sum_{k\in \mathcal {K_I}\backslash\{i\}}\bar{\bm u}^H \mathbf H_i\bm W_k\mathbf H_i^H\bar{\bm u} + \sigma_z^2\bar{\bm u}^H \mathbf T_i\bar{\bm u} + \sigma_i^2 \leq e^{\tau_i}, \forall i\in\mathcal{K_I}, \label{P2_Eqv_SDR1_sub2_eqv_cons:slack2}\\
	&& \sum_{i\in \mathcal {K_I}}\bar{\bm u}^H\mathbf G_j\bm W_i\mathbf G_j^H\bar{\bm u} + \sigma_z^2\bar{\bm u}^H \mathbf Z_j\bar{\bm u} \geq E_j, \forall j\in\mathcal {K_E}, \label{P2_Eqv_SDR1_sub2_eqv_cons:E}\\
	&& \sum_{i\in\mathcal {K_I}}\bar{\bm u}^H\mathbf Q_i\bar{\bm u} + \sigma_z^2\bar{\bm u}^H\mathbf P\bar{\bm u} \leq P_{\text{I}}, \label{P2_Eqv_SDR1_sub2_eqv_cons:amp}\\
    && [\bar{\bm u}]_{N+1} = 1. \label{P2_Eqv_SDR1_sub2_eqv_cons:equa}
	\end{eqnarray}
\end{subequations}
Note that the quadratic terms in \eqref{P2_Eqv_SDR1_sub2_eqv_cons:slack1} and \eqref{P2_Eqv_SDR1_sub2_eqv_cons:E}, and the RHS of \eqref{P2_Eqv_SDR1_sub2_eqv_cons:slack2} are all convex functions, thus making \eqref{P2_Eqv_SDR1_sub2_eqv_cons:slack1}-\eqref{P2_Eqv_SDR1_sub2_eqv_cons:E} non-convex. To deal with these constraints, we employ the SCA technique. Specifically, given the local feasible points $\bar{\bm u}^{(t)}$ and $\tau_i^{(t)}$ at the $t$-th iteration, by replacing the convex terms mentioned above with their respective first-order Taylor expansion-based lower bounds, we can obtain a convex subset of \eqref{P2_Eqv_SDR1_sub2_eqv_cons:slack1}-\eqref{P2_Eqv_SDR1_sub2_eqv_cons:E}, as follows 
\begin{subequations}\label{P2_Eqv_SDR1_sub2_cons_sca}
	\begin{align}
	\sum_{k\in \mathcal {K_I}} \chi^{(t)}\big(\bar{\bm u}, \mathbf H_i\bm W_k\mathbf H_i^H\big)  + \sigma_z^2\chi\left(\bar{\bm u}, \mathbf T_i\right) + \sigma_i^2 & \geq e^{\rho_i}, \forall i\in\mathcal{K_I}, \label{P2_Eqv_SDR1_sub2_eqv_cons:slack1_sca}\\
	\sum_{k\in \mathcal {K_I}\backslash\{i\}}\bar{\bm u}^H \mathbf H_i\bm W_k\mathbf H_i^H\bar{\bm u} + \sigma_z^2\bar{\bm u}^H \mathbf T_i\bar{\bm u} + \sigma_i^2 & \leq e^{\tau_i^{(t)}}\left(\tau_i - \tau_i^{(t)} + 1 \right), \forall i\in\mathcal{K_I}, \label{P2_Eqv_SDR1_sub2_eqv_cons:slack2_sca}\\
	\sum_{i\in \mathcal {K_I}} \chi^{(t)}\big(\bar{\bm u}, \mathbf G_j\bm W_i\mathbf G_j^H\big) + \sigma_z^2\chi\left(\bar{\bm u}, \mathbf Z_j\right) & \geq E_j, \forall j\in\mathcal {K_E}, \label{P2_Eqv_SDR1_sub2_eqv_cons:E_sca}
	\end{align}
\end{subequations}
where $\chi^{(t)}(\bar{\bm u}, \mathbf B) \triangleq 2{\rm Re}\{\bar{\bm u}^H\mathbf B\bar{\bm u}^{(t)}\} - \left( \bar{\bm u}^{(t)}\right)^H\mathbf B\bar{\bm u}^{(t)}$, $\mathbf B \in \left\lbrace\mathbf H_i\bm W_k\mathbf H_i^H,\mathbf T_i, \mathbf G_j\bm W_i\mathbf G_j^H,\mathbf Z_j\right\rbrace $. As a result, a lower bound of the optimal solution to problem \eqref{P2_Eqv_SDR1_sub2_eqv} can be obtained by solving the following convex QCQP with off-the-shelf convex optimization solvers, e.g., CVX \cite{2004_S.Boyd_cvx}. 
\begin{subequations}\label{P2_Eqv_SDR1_sub2_eqv_sca}
	\begin{eqnarray}
	&\underset{\bar{\boldsymbol u}, \left\lbrace \rho_i, \tau_i\right\rbrace}{\max}& \sum_{i\in\mathcal {K_I}}\mu_i \log_2e^{\left(\rho_i - \tau_i\right)}  \\	
	&\text{s.t.}& \eqref{P2_Eqv_SDR1_sub2_cons_sca}, \eqref{P2_Eqv_SDR1_sub2_eqv_cons:amp}, \eqref{P2_Eqv_SDR1_sub2_eqv_cons:equa}. 
	\end{eqnarray}
\end{subequations}

%\vspace{-5mm}
\subsection{Overall Algorithm}
Based on Sections \ref{P2_trans_opt} and \ref{P2_ref_opt}, we propose an efficient algorithm for (P2-Eqv) by applying the AO method. Specifically, we solve (P2-Eqv-SDR2) by alternatingly solving problems \eqref{P2_Eqv_SDR1_sub1_SCA} and \eqref{P2_Eqv_SDR1_sub2_eqv_sca} until convergence is achieved, where the obtained solution in each iteration is used as the input of the next iteration. The solution once convergence is reached is denoted by $\grave{\Pi} \triangleq \{\{\grave{\bm W}_i\}, \grave{\mathbf \Theta}, \{\grave{\rho_i}, \grave{\tau_i}\}\} $. If ${\rm rank}\big(\grave{\bm W}_i\big) \leq 1$ holds for all $i\in\mathcal{K_I}$, the transmit precoder $\{\grave{\bm w}_i\}$ can be recovered from $\{\grave{\bm W}_i\}$ via the Cholesky decomposition. Otherwise, we define 
%\fontsize{11.6pt}{\baselineskip} 
\begin{subequations}\label{trans_prec_recover}
	\begin{align}
	&\bar{\bm w}_i = \left(\bm h_i^H\grave{\bm W}_i\bm h_i\right)^{-1/2}\grave{\bm W}_i\bm h_i, \bar{\bm W}_i = \bar{\bm w}_i(\bar{\bm w}_i)^H, \forall i\in \mathcal {K_I}\backslash\{m\}, \\
	&\bar{\bm W}_m = \sum_{i\in\mathcal {K_I}}\grave{\bm W}_i - \sum_{i\in\mathcal {K_I}\backslash\{m\}}\bar{\bm W}_i,
	\end{align}
\end{subequations}
\normalsize where $m$ can be any element in the set $\mathcal {K_I}$. 
According to the proof of Theorem \ref{theo2} in Appendix \ref{Appen_B}, $\bar{\Pi} \triangleq \{\{\bar{\bm W}_i\}, \grave{\mathbf \Theta}, \{\grave{\rho_i}, \grave{\tau_i}\}\} $ is a feasible solution to (P2-Eqv-SDR2), and the objective values attained at $\grave{\Pi}$ and $\bar{\Pi}$ are the same. In this case, $\{\bar{\bm w}_i\}$ can be obtained by using \eqref{trans_prec_recover} and performing the Gaussian randomization method over $\bar{\bm W}_m$. 

Similar to the analyses in Sections \ref{subsec_WPT} and \ref{subsec:P1_alg}, the proposed algorithm is guaranteed to converge and the overall computational complexity is about \cite{2010_Imre_SDR_complexity,1994_Nesterov_QCQP_complexity}\begin{align}
O\left[\mathcal T\left(\sqrt{b}\left(ab^3 + a^2b^2 + a^3 \right)\log\left(\frac{1}{\varepsilon}\right) + \sqrt{c}\left(ca^2 + a^3 \right)\ln\left( \frac{2c V}{\varepsilon}\right) \right)\right],
\end{align}
where $a\triangleq 2K_{\rm I} + K_{\rm E} + 2$, $b \triangleq M$, $c\triangleq 2K_{\rm I} + N + 1$, $\varepsilon$ is the prescribed accuracy, $V$ is a constant defined in \cite{1994_Nesterov_QCQP_complexity}, and $\mathcal T$ denotes the number of iterations needed for convergence. 

\begin{figure}[!t]
	%\vspace{-2mm}
	\centering
	\includegraphics[width=0.7\textwidth]{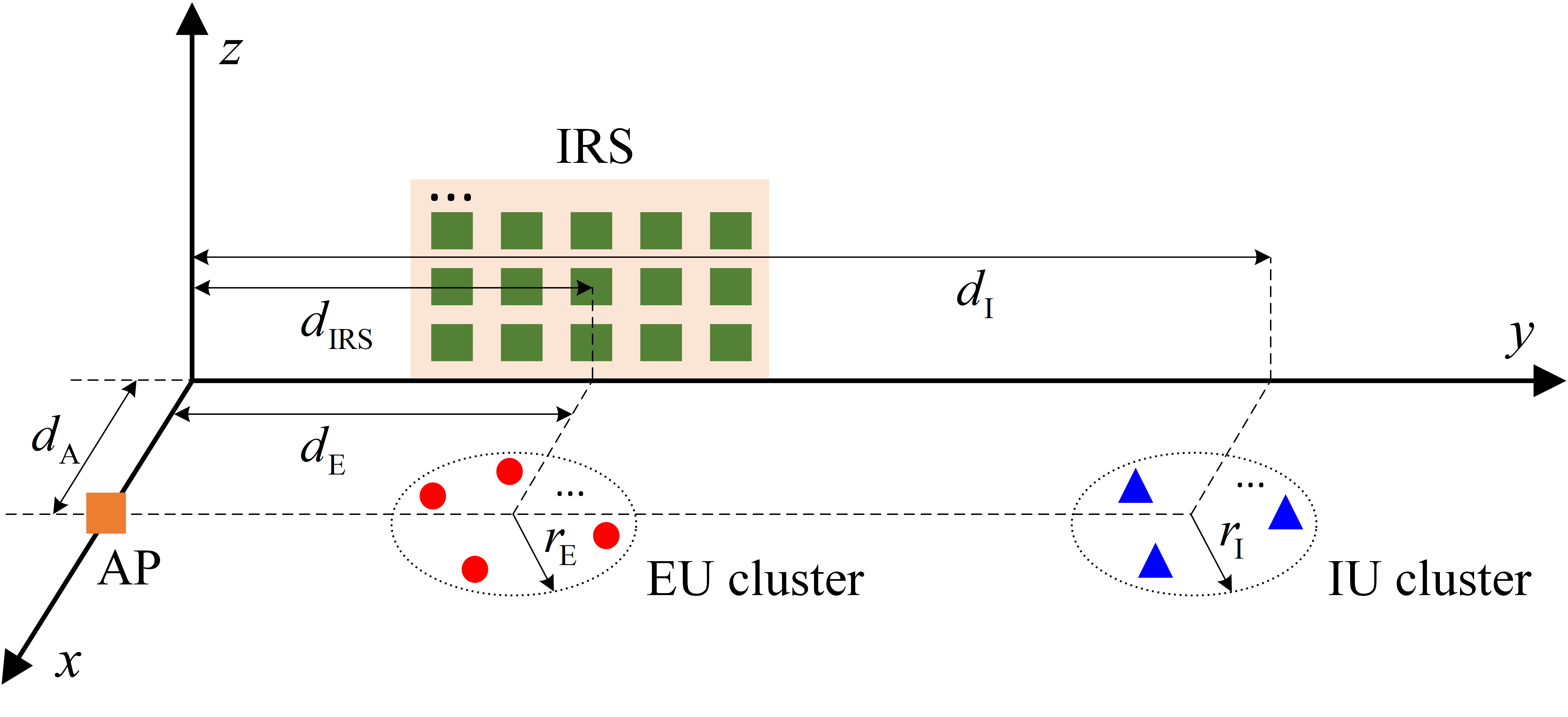}
	\caption{Simulation setup. } \label{Fig:simulation setup}
	\vspace{-5mm}
\end{figure}

%\vspace{-2mm}
\section{Simulation Results}\label{Section_simulation}
In this section, we demonstrate the effectiveness of the proposed algorithms with the aid of numerical simulations. As illustrated in Fig. \ref{Fig:simulation setup}, a three-dimensional (3D) coordinate setup is considered, where the AP and the IRS are located at $\left(d_{\rm A}, 0, 0 \right) $ and $\left(0, d_{\rm IRS}, 0 \right) $, respectively. The EUs and the IUs are randomly and uniformly distributed within two disks centered at $\left(d_{\rm A}, d_{\rm E}, 0 \right)$ and $\left(d_{\rm A}, d_{\rm I}, 0 \right)$ with radii equal to $r_{\rm E}$ and $r_{\rm I}$, respectively. The system is assumed to operate on a carrier frequency of $750$ MHz, with a wavelength $\lambda_c = 0.4$ meter (m) \cite{2020_Qingqing_SWIPT_QoS,2021_Changsheng_active_or_passive}. The large-scale path loss is modeled as $L(d) = C_0\left(d/D_0\right)^{(-\alpha)} $ \cite{2019_Qingqing_Joint}, where $C_0 = \left( \frac{\lambda_c}{4\pi}\right)^2$ is the path loss at the reference distance $D_0 = 1$ m, $d$ denotes the link distance, and $\alpha$ represents the path loss exponent. The path loss exponents of the AP-IRS, IRS-user, and AP-user links are set equal to $\alpha_{\rm AI} = 2.2$ \cite{2020_Qingqing_SWIPT_letter}, $\alpha_{\rm Iu} = 2.2$ \cite{2020_Qingqing_SWIPT_letter}, and $\alpha_{\rm Au} = 3.2$, respectively. We assume that the AP-IRS and the IRS-user links experience Rician fading with a Rician factor of $3$ dB, while the AP-user links undergo Rayleigh fading \cite{2022_zhendong_WPCN_robust}. In addition, we set $\alpha_j = 1$, $\forall j\in\mathcal{K_E}$ in (P1) and $\mu_i = 1$, $\forall i\in\mathcal{K_I}$ in (P2), i.e., the sum-power harvested by all the EUs and the sum-rate of all the IUs are considered, respectively \cite{2020_Qingqing_SWIPT_letter,2020_Cunhua_SWIPT}. 
Unless otherwise specified, other system parameters are set as follows: $\sigma_z^2 = \sigma_i^2 = -80$ dBm \cite{2021_Ruizhe_active_SIMO}, $\gamma_i = \gamma$, $\forall i\in\mathcal{K_I}$, $E_j = E$, $\forall j\in\mathcal{K_E}$, $N = 50$, $M = 5$, $d_{\rm A} = 3$ m, $d_{\rm I} = 100$ m, and $r_{\rm E} = r_{\rm I} = 2$ m.

For comparison purposes, we focus on the following two benchmark schemes: 1) Identical amplitudes: all active elements are assumed to have identical amplitudes, i.e., $\beta_n = \beta$, $\forall n\in\mathcal N$, and $\beta$ is optimized; 2) Passive IRS: we set $\beta_n = 1$, $\forall n\in\mathcal N$, neglect the noise introduced by the IRS, and remove the amplification power constraint. Moreover, for a fair comparison, we assume that the AP's total transmit power budget is $P_{\rm A} + P_{\rm I}$ in this benchmark scheme. The simulation results are obtained by averaging $100$ independent realizations of the channels and the users' locations. %\looseness=-1

\begin{figure}[!t]
	\setlength{\abovecaptionskip}{-0.7mm}
	\begin{minipage}[t]{0.5\linewidth}
		\centering
		\includegraphics[scale=0.65]{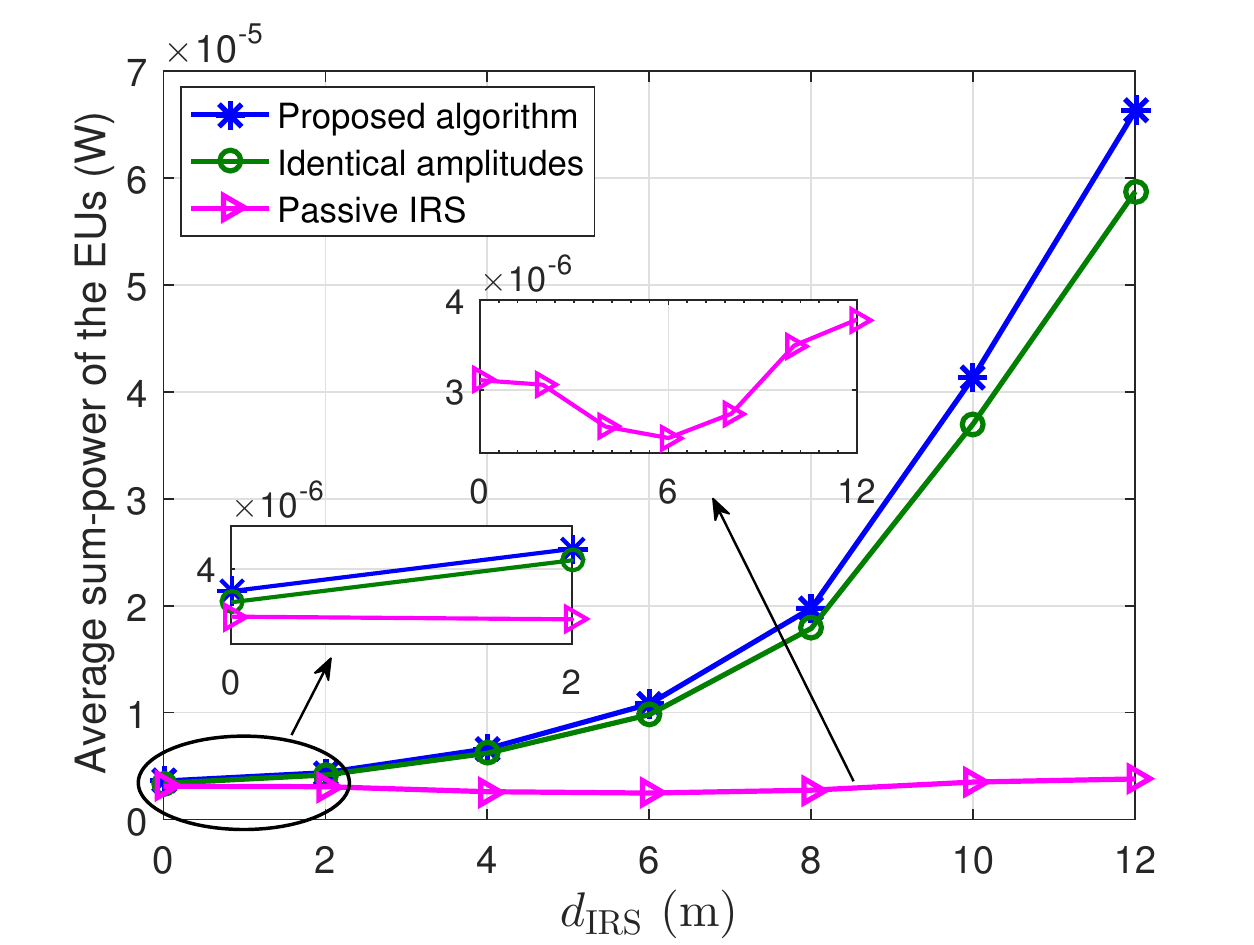}
		\caption{Average sum-power of the EUs versus the y-axis\\ coordinate value of the IRS.}
		\label{fig:WPT_WSP_vs_dIRS}
	\end{minipage}%
	\begin{minipage}[t]{0.5\linewidth}
		\centering
		\includegraphics[scale=0.65]{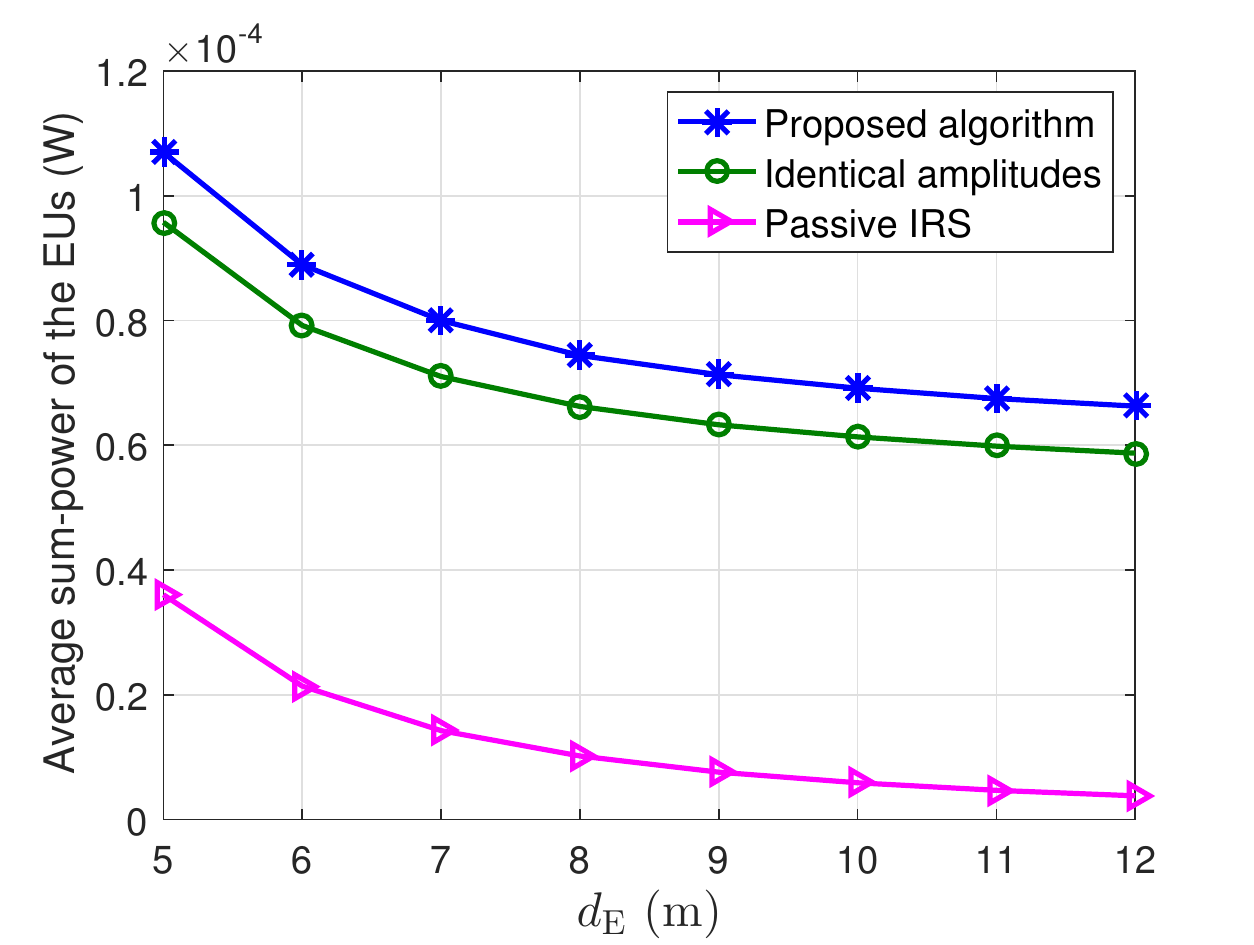}
		\caption{Average sum-power of the EUs versus the distance between the AP and the center of the disk of the EU cluster, where $d_{\rm IRS} = d_{\rm E}$. }
		\label{fig:WPT_WSP_vs_dE}
	\end{minipage}
	\vspace{-6mm}
\end{figure}

%\vspace{-2mm}
\subsection{Weighted Sum-power Maximization}
%\vspace{-3mm}
\subsubsection{Special Case with No IUs}
We first investigate a special case of the weighted sum-power maximization problem (P1) where there exist no IUs, i.e., $\mathcal{K_I} = \emptyset$. By varying the value of $d_{\rm IRS}$, we examine in Fig. \ref{fig:WPT_WSP_vs_dIRS} the average sum-power of $K_{\rm E} = 4$ EUs with $P_{\rm A} = 23$ dBm, $P_{\rm I} = 5$ dBm, and $d_{\rm E} = 12$ m. When deploying a passive IRS, as expected, it is observed that the EUs harvest the lowest sum-power when the IRS is located far from both the AP and the EU cluster (i.e., $d_{\rm IRS} = 6$ m) due to the product path loss attenuation law. When deploying an active IRS, in contrast, the sum-power of EUs increases drastically as the IRS moves closer to the EU cluster, since the incident signal power at the IRS becomes weaker with increasing $d_{\rm IRS}$, the active IRS can provide higher amplification gain according to constraint \eqref{WPT_cons:amp}, which compensates for the product path loss attenuation and contributes to an increase in the sum-power harvested by the EUs. This result indicates that to reap the full benefits of an active IRS, we should deploy it close to the EUs. %This trend holds true by applying the proposed algorithm and by assuming that the amplitudes of the reflection coefficients are identical. %to improve the performance of EUs, it is better to deploy the active IRS close to them. 
Besides, it can been seen that the proposed algorithm outperforms the scheme with identical amplitudes at the active IRS as well as the scheme employing a passive IRS. The performance loss caused by adopting identical amplitudes for all elements shows the importance of properly designing the amplitudes at the active IRS for enhancing the system performance. 

To further demonstrate the benefits brought by the active IRS to WPT, we plot in Fig. \ref{fig:WPT_WSP_vs_dE} the average sum-power of the EUs versus the distance between the AP and the center of the disk of the EU cluster, where the IRS moves with the EU cluster to maintain the condition $d_{\rm IRS} = d_{\rm E}$. The other parameters are the same as those in Fig. \ref{fig:WPT_WSP_vs_dIRS}. As can be seen, by deploying an active IRS around the EUs, their sum harvested power is observably improved compared to the case with a passive IRS. This observation suggests that deploying an active IRS is more effective than deploying a passive IRS in extending the WPT operating range. 

\begin{figure}[!tbp]
	%\vspace{-2mm}
	\setlength{\abovecaptionskip}{-0.2mm}
	\begin{minipage}[t]{0.5\linewidth}
	\centering
	\includegraphics[scale=0.65]{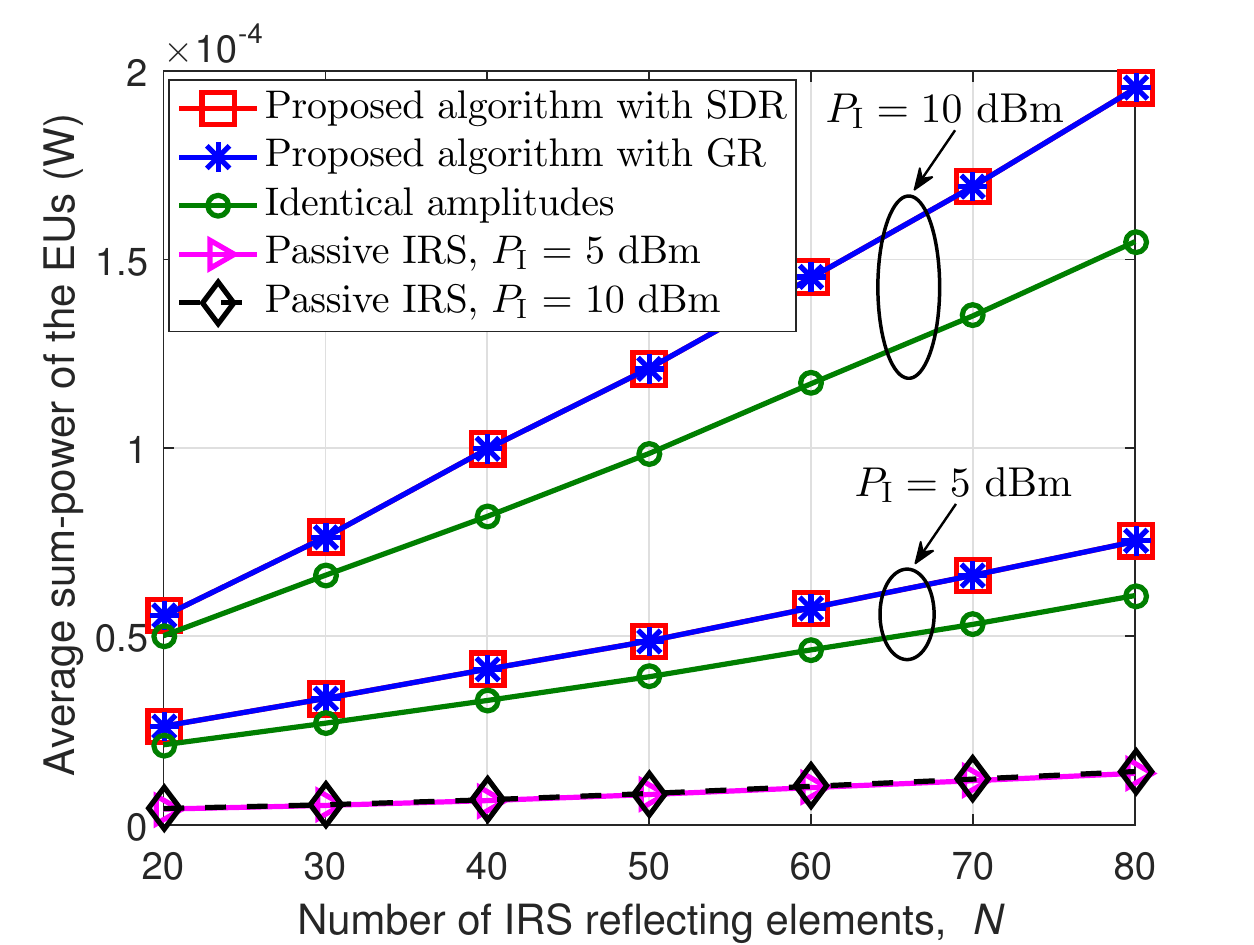}
	\caption{Average sum-power of the EUs versus the number\\ of IRS elements.}
	\label{fig:SWIPT_WSP_vs_N}
    \end{minipage}
	\begin{minipage}[t]{0.5\linewidth}
		\centering
		\includegraphics[scale=0.65]{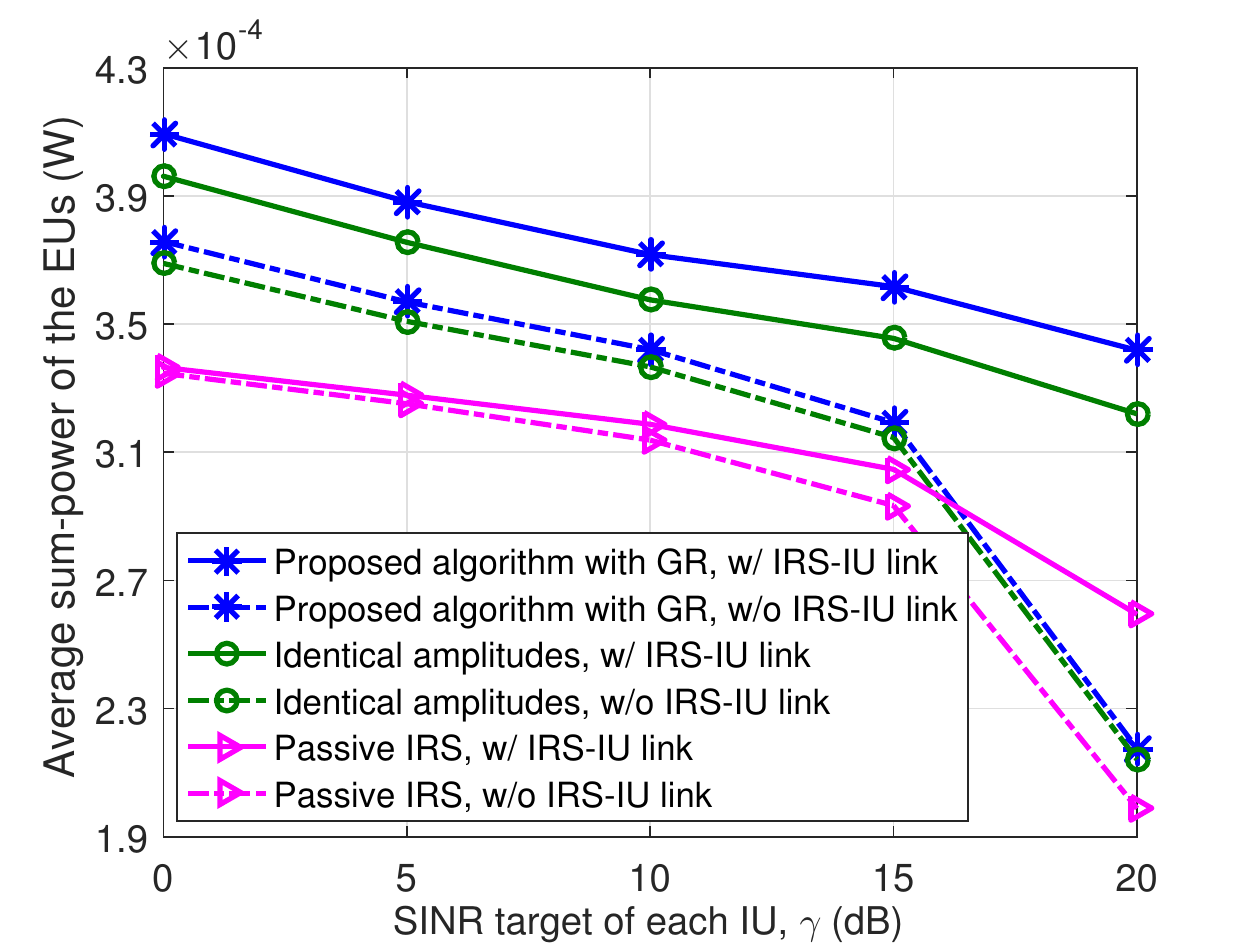}
		\caption{Average sum-power of the EUs versus the SINR target of each IU.}
		\label{fig:SWIPT_WSP_vs_SINR}
	\end{minipage}%
    \vspace{-5mm}
\end{figure}

\subsubsection{General Case with EUs and IUs Coexisting}
Next, we consider the general case of (P1) where both the EUs and the IUs exist. In Fig. \ref{fig:SWIPT_WSP_vs_N}, we show the average sum-power of the EUs versus the number of IRS elements with $K_{\rm E} = 4$, $K_{\rm I} = 2$, $P_{\rm A} = 23$ dBm, $\gamma = 5$ dB, and $d_{\rm IRS} = d_{\rm E} = 8$ m. Note that the ``Proposed algorithm with SDR'' corresponds to the solution when the proposed algorithm for the SDR problem in \eqref{P1_simp_SDR} converges, while the ``Proposed algorithm with GR'' utilizes the Gaussian randomization method to construct a rank-one $\bm U$ based on the solution obtained with the SDR method. It is observed that the performance of the proposed algorithm with GR closely approaches that achieved by the SDR method. One can also conclude from Fig. \ref{fig:SWIPT_WSP_vs_N} that the deployment of an active IRS is more suitable for application to space-limited scenarios, since it can significantly reduce the required surface size for achieving a given performance level. Furthermore, we observe that increasing $N$ widens the performance gap between the proposed algorithm and the other two benchmark schemes, as the former offers a better utilization of the system resources. Finally, we can see that a larger $P_{\rm I}$ leads to a better performance of the SWIPT system with an active IRS. This is excepted since an active IRS can provide a higher amplification gain for a larger value of $P_{\rm I}$. In contrast, the increase of $P_{\rm I}$ only brings a negligible performance gain to the passive IRS scheme as the value of $P_{\rm A}$, which is significantly larger than the two different values of $P_{\rm I}$, dominates the performance of this scheme. %\looseness=-1  

In Fig. \ref{fig:SWIPT_WSP_vs_SINR}, we study the average sum-power of the EUs versus the SINR target of each IU with $P_{\rm A} = 30$ dBm and $P_{\rm I} = 10$ dBm (other parameters are set to be the same as in Fig. \ref{fig:SWIPT_WSP_vs_N}). Two cases with and without the IRS-IU link are considered. As expected, the proposed algorithm significantly outperforms the other two benchmark schemes over the whole considered SINR regime. Another observation is that the performance gap between the two cases of the proposed algorithm is much larger than that of the passive IRS scheme, which shows the advantage of deploying an active IRS for effective WIT even when the IUs are far away from the IRS. 

\begin{figure}[!tbp]
	%\vspace{-3mm}
	\setlength{\abovecaptionskip}{-0.2mm}
	\begin{minipage}[t]{0.5\linewidth}
		\centering
		\includegraphics[scale=0.64]{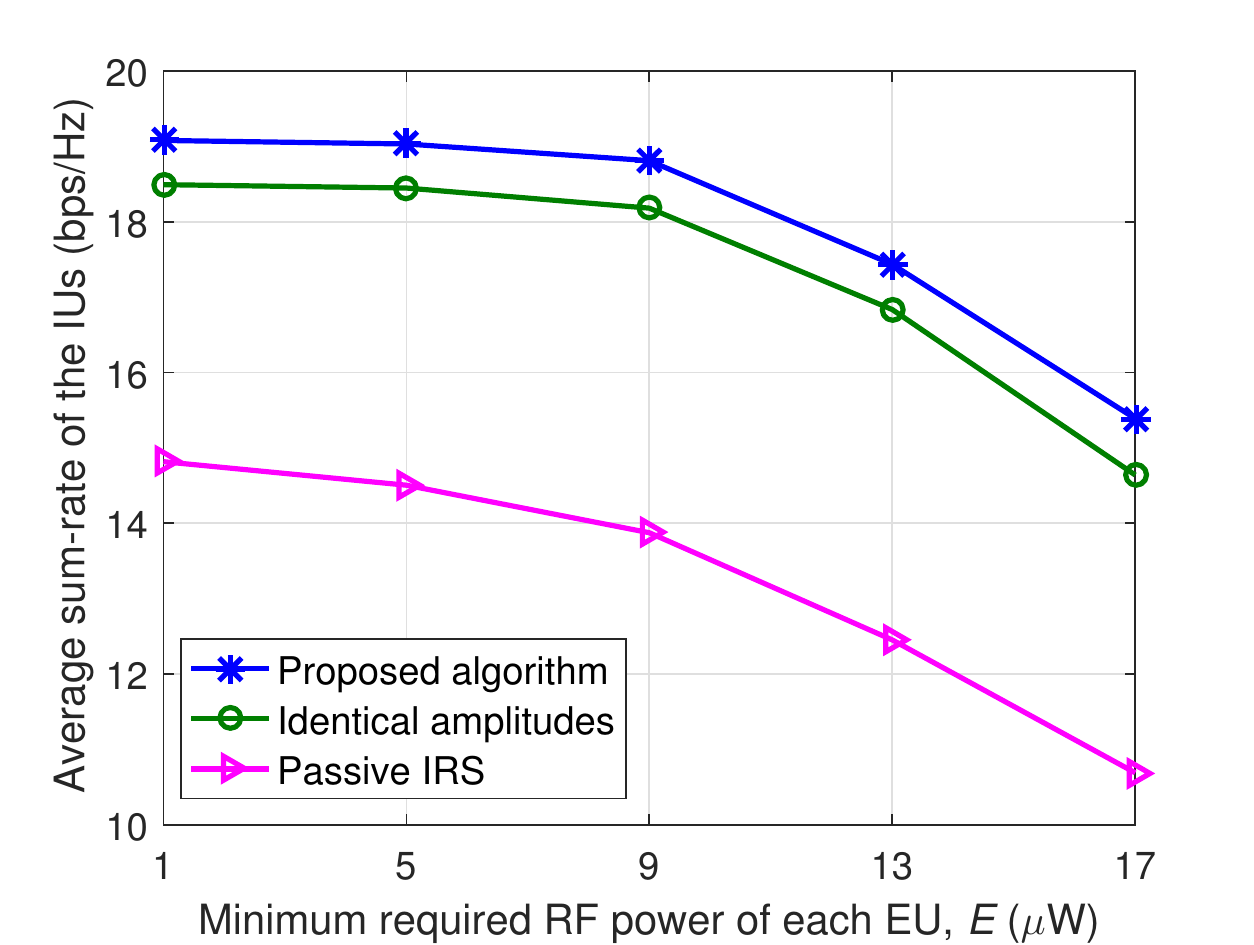}
		\caption{Average sum-rate of the IUs versus the minimum\\ required RF power of each EU.}
		\label{fig:WSR_vs_E}
	\end{minipage}%
	\begin{minipage}[t]{0.5\linewidth}
		\centering
		\includegraphics[scale=0.64]{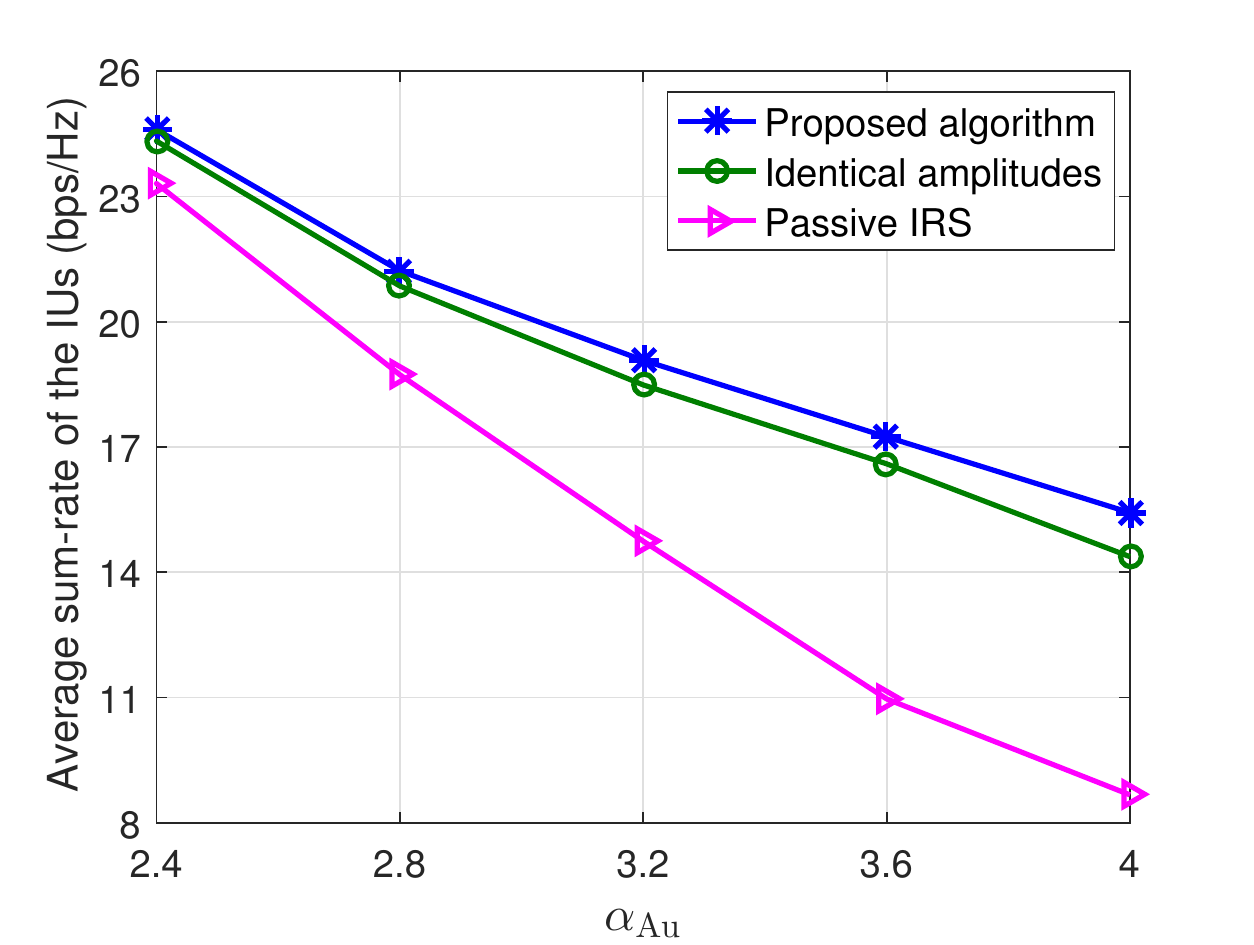}
		\caption{Average sum-rate of the IUs versus the path loss exponent of the AP-user link.}
		\label{fig:WSR_vs_pathloss}
	\end{minipage}
	\vspace{-4mm}
\end{figure}

%\vspace{-3mm}
\subsection{Weighted Sum-Rate Maximization}
In this subsection, we evaluate the performance of the proposed algorithm for the weighted sum-rate maximization problem (P2). Fig. \ref{fig:WSR_vs_E} illustrates the average sum-rate of IUs versus the minimum required RF power of each EU with $K_{\rm E} = K_{\rm I} = 2$, $P_{\rm A} = 30$ dBm, $P_{\rm I} = 10$ dBm, and $d_{\rm IRS} = d_{\rm E} = 8$ m. It is observed that the proposed algorithm performs much better than the passive IRS scheme, and the performance gap is larger with the increase of $E$. This again indicates that the deployment of an active IRS is beneficial to both WIT and WPT. 

Under the same setup as in Fig. \ref{fig:WSR_vs_E}, we study in Fig. \ref{fig:WSR_vs_pathloss} the impact of the path loss exponent of the AP-user direct link on the performance of different schemes when $E = 3$ $\mu$W. It can be seen that deploying an active IRS is more advantageous than deploying a passive IRS regardless of the value of $\alpha_{\rm Au}$. In addition, the performance gap between the proposed algorithm and the passive IRS scheme increases markedly with $\alpha_{\rm Au}$. This is expected since the performance of both schemes is dominated by the AP-user direct link when $\alpha_{\rm Au}$ is small while by the AP-IRS-user reflected link when $\alpha_{\rm Au}$ is large. 

In Fig. \ref{fig:WSR_vs_noise}, the impact of the noise variance of the active IRS on the performance of the proposed algorithm is investigated. Here, we set $E = 3$ $\mu$W and the other parameters are the same as those in Fig. \ref{fig:WSR_vs_E}. We can observe that even if $\sigma_z^2$ is very large, an active IRS can outperform a passive IRS via a proper joint optimization of the transmit and reflect beamforming. Moreover, the performance degradation of the proposed algorithm due to the increase of $\sigma_z^2$ is very small, sometimes even negligible when the IUs are far away from the IRS (i.e., $d_{\rm I} = 100$ m).

\begin{figure}[!tbp]
	\setlength{\abovecaptionskip}{-0.2mm}
	\begin{minipage}[t]{0.5\linewidth}
		\centering
		\includegraphics[scale=0.63]{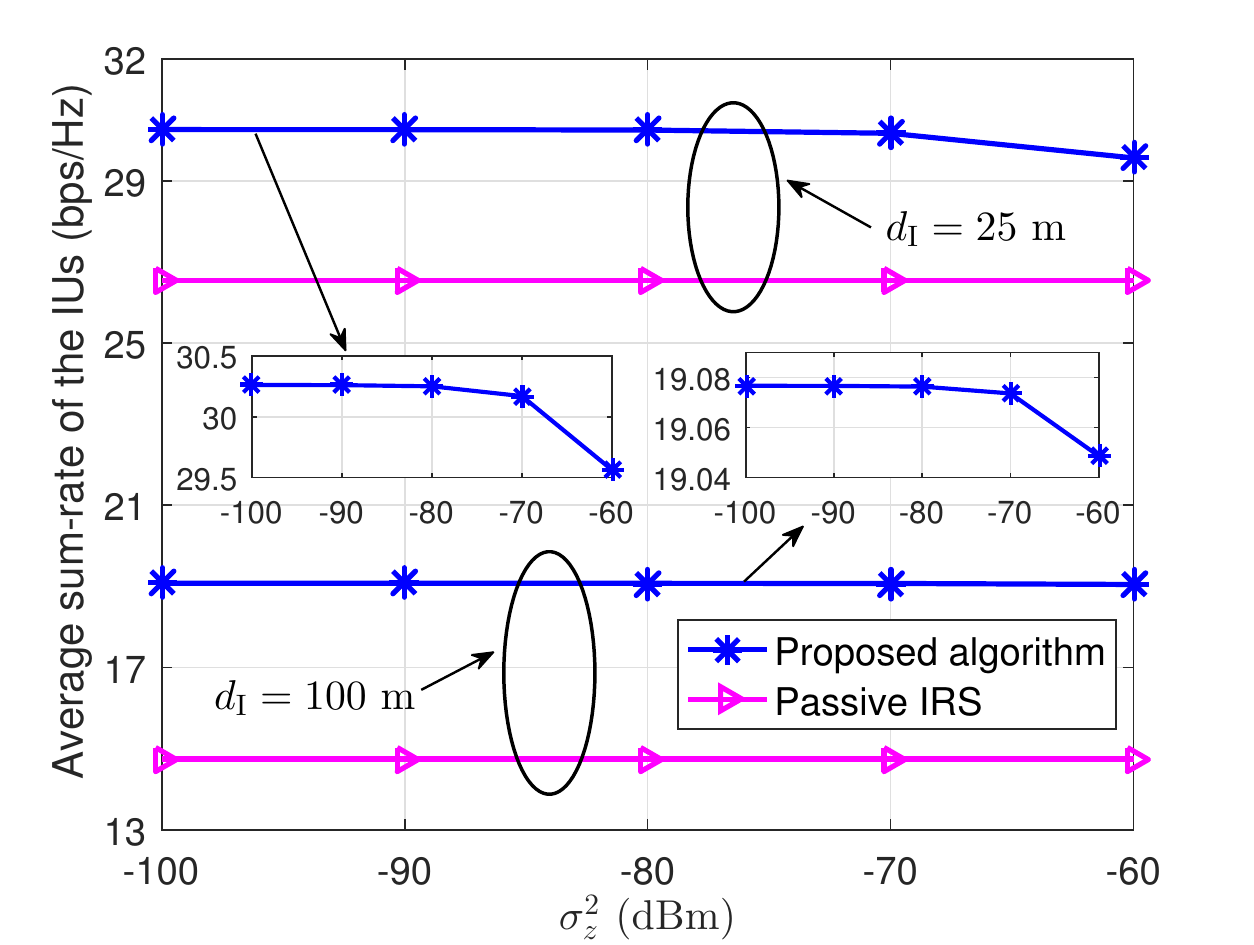}
		\caption{Average sum-rate of the IUs versus the noise varia-\\nce of the active IRS.}
		\label{fig:WSR_vs_noise}
	\end{minipage}%
	\begin{minipage}[t]{0.5\linewidth}
		\centering
		\includegraphics[scale=0.63]{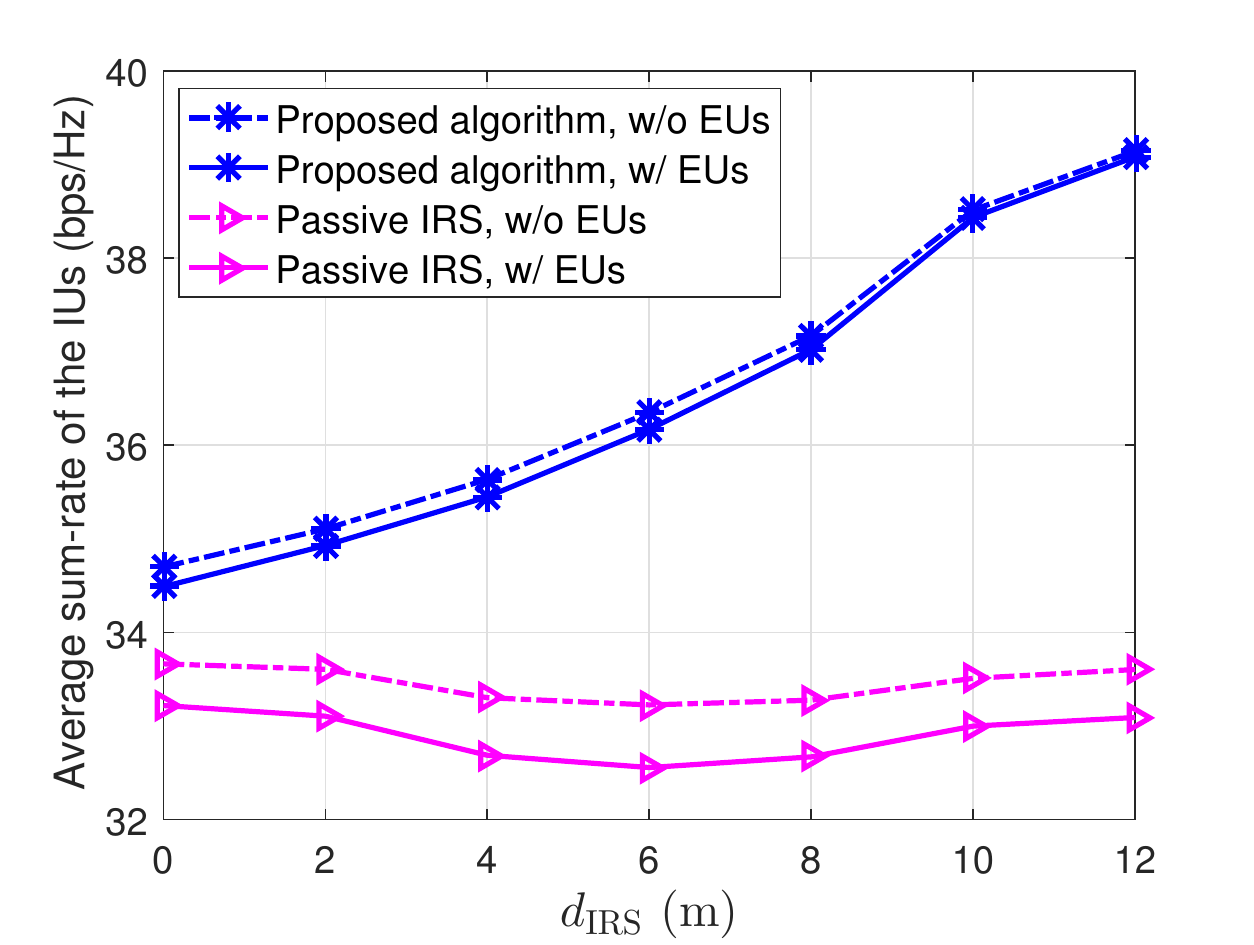}
		\caption{Average sum-rate of the IUs versus the y-axis coordinate value of the IRS.}
		\label{fig:WSR_vs_dIRS}
	\end{minipage}
	\vspace{-3mm}
\end{figure}

Finally, to gain more insights, we consider a setup where $d_{\rm I} = d_{\rm E} = 12$ m, i.e., the EUs and the IUs are randomly and uniformly located in the same disk centered at $(3, 12, 0)$ with radius $2$ m. In Fig. \ref{fig:WSR_vs_dIRS}, we compare the average sum-rate achieved by the proposed algorithm and the passive IRS scheme for both cases with and without the EUs versus the y-axis coordinate value of the IRS when $E = 1$ $\mu$W. The other simulation parameters are the same as those in Fig. \ref{fig:WSR_vs_E}. In addition to observations similar to those in Fig. \ref{fig:WPT_WSP_vs_dIRS}, we can see that the existence of the EUs brings a much smaller performance degradation to the proposed algorithm than to the scheme employing a passive IRS, especially when the active IRS is deployed in the proximity of the users. This result further shows the superiority of deploying an active IRS in enhancing the performance of SWIPT systems with both EUs and IUs coexisting. 

\vspace{-2mm}
\section{Conclusion}\label{Section_conclusion}
In this paper, we studied the weighted sum-power and sum-rate maximization problems in an active IRS-assisted SWIPT system. Specifically, the first problem aimed to maximize the weighted sum-power harvested by the EUs while meeting the specified SINR targets at the IUs, and the second problem was intended to maximize the weighted sum-rate of the IUs while satisfying the EH requirements at the EUs. In both problems, the transmit precoder at the AP and the reflection-coefficient matrix at the IRS were jointly optimized. Interestingly, it was rigorously proved that there is no loss of optimality in removing dedicated energy beams in the SDR reformulations of both optimization problems. Based on these results, efficient suboptimal algorithms were proposed for the resulting problems. Numerical results verified that, compared with the benchmark scheme using a passive IRS, the proposed designs with an active IRS are able to significantly enhance the performance of both the EUs and the IUs. Useful insights on the appropriate deployment of an active IRS were also identified, providing helpful guidance for the practical design and implementation.                       

\appendices
\section{Proof of Theorem \ref{theo1}} \label{Appen_A}
To prove Theorem \ref{theo1}, we aim to show that (P1-SDR1) shares the same optimal value with the following problem (P1-SDR2) and that an optimal solution satisfying ${\rm rank}\left(\bm W_i^*\right) = 1$, $\forall i\in\mathcal{K_I}$ exists for (P1-SDR2):
%\fontsize{11.7pt}{\baselineskip}
\begin{subequations}
	\begin{eqnarray}
	\hspace{-1cm}\text{(P1-SDR2)}: &\underset{\left\lbrace \boldsymbol W_i\in\mathbb H^M\right\rbrace,\mathbf \Theta}{\max}& \sum_{i\in\mathcal {K_I}} {\rm tr}\left(\bm S \bm W_i \right) + \sum_{j\in\mathcal {K_E}}\alpha_j\sigma_z^2\left\|\bm g_{r,j}^H\mathbf \Theta\right\|^2 \\
	&\text{s.t.}& \frac{{\rm tr}\left(\bm h_i\bm h_i^H\bm W_i \right) }{\gamma_i} - \sum_{k\in \mathcal {K_I}\backslash\{i\}}{\rm tr}\left(\bm h_i\bm h_i^H\bm W_k \right) - \bar \sigma_i^2 \geq 0,  \forall i\in\mathcal {K_I}, \label{SDR2_1}\\
	&& \sum_{i\in\mathcal {K_I}}{\rm tr}\left(\bm W_i \right) \leq P_{\text{A}}, \ \sum_{i\in\mathcal {K_I}}{\rm tr}\left(\bm C\bm W_i \right) \leq \bar P_{\text{I}}, \label{SDR2_3}\\
	&& \bm W_i \succeq \bm 0, \forall i\in\mathcal {K_I}. \label{SDR2_4}
	\end{eqnarray}
\end{subequations}
\normalsize Denote $\zeta_1^*$ and $\zeta_2^*$ as the optimal objective values of (P1-SDR1) and (P1-SDR2), respectively. Suppose that $\{ \{\hat {\bm W}_i\}, \hat {\bm W}_{\mathrm E}, \hat {\mathbf \Theta}\} $ is an arbitrary optimal solution to (P1-SDR1) corresponding to $\zeta_1^*$. Obviously, $\zeta_1^* \geq \zeta_2^*$ since (P1-SDR2) is a special case of (P1-SDR1) with $\bm W_{\mathrm E} = \mathbf 0$. Next, we prove that $\zeta_1^* \leq \zeta_2^*$ also holds, building on the insight that by adding $\hat {\bm W}_{\mathrm E}$ into any $\hat {\bm W}_i$, we can construct a feasible solution to (P1-SDR2) that achieves the same objective value as $\zeta_1^*$. Specifically, let $\tilde {\bm W}_m = \hat{\bm W}_m + \hat{\bm W}_{\mathrm E}$ for any $m \in\mathcal {K_I}$ and $\tilde {\bm W}_i = \hat{\bm W}_i$, $\forall i\in\mathcal {K_I}\backslash\{m\}$. It is easy to verify that the constraints in \eqref{SDR2_3} and \eqref{SDR2_4} hold for the new solution set $\tilde{\Gamma} \triangleq \{\{ \tilde {\bm W}_i\}, \hat {\mathbf \Theta}\} $ and the objective value of (P1-SDR2) achieved by $\tilde{\Gamma}$ equals $\zeta_1^*$. Then, we show that $\tilde{\Gamma}$ also fulfills the constraints in \eqref{SDR2_1}. To this end, the following two cases are considered: 
\subsubsection{For $m \in\mathcal {K_I}$} We have
%\fontsize{11.8pt}{\baselineskip}
\begin{align}
& \frac{{\rm tr}\left(\bm h_m\bm h_m^H\tilde{\bm W}_m \right) }{\gamma_m} - \sum_{k\in \mathcal {K_I}\backslash\{m\}}{\rm tr}\left(\bm h_m\bm h_m^H\tilde{\bm W}_k \right) - \bar \sigma_m^2 \nonumber\\
= & \ \frac{{\rm tr}\left(\bm h_m\bm h_m^H\hat{\bm W}_m \right) }{\gamma_m} + \frac{{\rm tr}\left(\bm h_m\bm h_m^H\hat{\bm W}_E \right) }{\gamma_m}- \sum_{k\in \mathcal {K_I}\backslash\{m\}}{\rm tr}\left(\bm h_m\bm h_m^H\hat{\bm W}_k \right) - \bar \sigma_m^2 \nonumber\\
\overset{(a)}{\geq} & \ \frac{{\rm tr}\left(\bm h_m\bm h_m^H\hat{\bm W}_m \right) }{\gamma_m}  - {\rm tr}\left(\bm h_m\bm h_m^H\hat{\bm W}_E \right)-\sum_{k\in \mathcal {K_I}\backslash\{m\}}{\rm tr}\left(\bm h_m\bm h_m^H\hat{\bm W}_k \right) -\bar \sigma_m^2 \overset{(b)}{\geq} \ 0, 
\end{align}
\normalsize where $(a)$ uses $\hat{\bm W}_E \succeq \mathbf 0$ and $(b)$ holds due to \eqref{SINR_orig}. 

\subsubsection{For any $i\in\mathcal {K_I}\backslash\{m\}$} It follows that
%\fontsize{11.5pt}{\baselineskip}
\begin{align}
& \frac{{\rm tr}\left(\bm h_i\bm h_i^H\tilde{\bm W}_i \right) }{\gamma_i} - \sum_{k\in \mathcal {K_I}\backslash\{i\}}{\rm tr}\left(\bm h_i\bm h_i^H\tilde{\bm W}_k \right) - \bar \sigma_i^2 \nonumber\\
& = \frac{{\rm tr}\left(\bm h_i\bm h_i^H\tilde{\bm W}_i \right) }{\gamma_i} - \sum_{ k\in \mathcal {K_I}\backslash\{i,m\}}{\rm tr}\left(\bm h_i\bm h_i^H\tilde{\bm W}_k \right) - {\rm tr}\left(\bm h_i\bm h_i^H\tilde{\bm W}_m \right) - \bar \sigma_i^2 \nonumber\\
& = \frac{{\rm tr}\left(\bm h_i\bm h_i^H\hat{\bm W}_i \right) }{\gamma_i} - \sum_{k\in \mathcal {K_I}\backslash\{i\}}{\rm tr}\left(\bm h_i\bm h_i^H\hat{\bm W}_k \right) - {\rm tr}\left(\bm h_i\bm h_i^H\hat{\bm W}_E\right)  - \bar \sigma_i^2 \geq 0,
\end{align}
\normalsize where the inequality follows from \eqref{SINR_orig}. 

Based on the above, we have
%\fontsize{11.5pt}{\baselineskip}
\begin{align}
\frac{{\rm tr}\left(\bm h_i\bm h_i^H\tilde{\bm W}_i \right) }{\gamma_i} - \sum_{k\in \mathcal {K_I}\backslash\{i\}}{\rm tr}\left(\bm h_i\bm h_i^H\tilde{\bm W}_k \right) - \bar \sigma_i^2 \geq 0,  \forall i\in\mathcal {K_I},
\end{align}
\normalsize which indicates that constraint \eqref{SDR2_1} also holds for $\tilde{\Gamma}$. As a result, $\tilde{\Gamma}$ is a feasible solution to (P1-SDR2). For (P1-SDR2), since its objective value achieved by the feasible solution $\tilde{\Gamma}$ is equal to $\zeta_1^*$ and must not be greater than its optimal objective value $\zeta_2^*$, we have $\zeta_1^* \leq \zeta_2^*$. Since $\zeta_1^* \geq \zeta_2^*$, we have $\zeta_1^* = \zeta_2^*$. 

Furthermore, according to \cite[Theorem 3.2]{2010_Yongwei_Rank}, there always exists an optimal solution to (P1-SDR2) satisfying $\sum_{ i\in \mathcal {K_I}}\left({\rm rank}\left(\bm W_i^*\right)\right)^2 \leq K_{\rm I} + 2$ under any $\mathbf \Theta$. Meanwhile, for $\gamma_i > 0$, $\forall i\in\mathcal{K_I}$, there must be $\bm W_i^* \neq \mathbf 0$ or equivalently ${\rm rank}\left(\bm W_i^*\right) \geq 1$. Then, it follows that ${\rm rank}\left(\bm W_i^*\right) = 1$, $\forall i\in\mathcal{K_I}$ should exist for (P1-SDR2). Combing the above results completes the proof.  

%\vspace{-5mm}
\section{Proof of Theorem \ref{theo2}}\label{Appen_B}
We prove Theorem \ref{theo2} by showing that (P2-Eqv-SDR1) shares the same optimal value with the following problem (P2-Eqv-SDR2) and that there always exists an optimal solution to (P2-Eqv-SDR2) satisfying ${\rm rank}\left( \bm W_i^*\right) \leq 1, \forall i\in\mathcal{K_I'}$, where $\mathcal{K_I'} \subseteq \mathcal{K_I}$ and $\left|\mathcal{K_I'}\right| \geq K_{\rm I} - 1$:
%\fontsize{11.5pt}{\baselineskip}
\begin{subequations}\label{P2_Eqv_SDR2}
	\begin{eqnarray}
    \hspace{-2.2cm}\text{(P2-Eqv-SDR2)}: 
	&\underset{\substack{\left\lbrace \boldsymbol W_i\in\mathbb H^M\right\rbrace, \\\mathbf \Theta, \left\lbrace \rho_i, \tau_i\right\rbrace}}{\max}& \sum_{i\in\mathcal {K_I}}\mu_i \log_2e^{\left(\rho_i - \tau_i\right)}  \label{P2_Eqv_SDR2_obj}\\	
	&\text{s.t.}& \sum_{k\in\mathcal {K_I}}{\rm tr}\left(\bm h_i\bm h_i^H\bm W_k \right) + \bar \sigma_i^2 \geq e^{\rho_i}, \forall i\in\mathcal{K_I}, \label{P2_Eqv_SDR2_cons:slack1}\\
	&& \sum_{k\in \mathcal {K_I}\backslash\{i\}}{\rm tr}\left(\bm h_i\bm h_i^H\bm W_k \right) + \bar \sigma_i^2 \leq e^{\tau_i}, \forall i\in\mathcal{K_I}, \label{P2_Eqv_SDR2_cons:slack2}\\
	&& \sum_{i\in\mathcal {K_I}}{\rm tr}\left(\bm g_j\bm g_j^H\bm W_i \right) \geq \bar E_j, \forall j\in\mathcal{K_E}, \label{P2_Eqv_SDR2_cons:E}\\
	&&  \sum_{i\in\mathcal {K_I}}{\rm tr}\left(\bm W_i \right) \leq P_{\text{A}}, \ \sum_{i\in\mathcal {K_I}}{\rm tr}\left(\bm C\bm W_i \right) \leq \bar P_{\text{I}},\label{P2_Eqv_SDR2_cons:amp}\\
	&&  \bm W_i \succeq \bm 0, \forall i\in\mathcal {K_I}. \label{P2_Eqv_SDR2_cons:posi}
	\end{eqnarray}
\end{subequations}
\normalsize Particularly, the proof of the equivalence between (P2-Eqv-SDR1) and (P2-Eqv-SDR2) is similar to that of the equivalence between (P1-SDR1) and (P1-SDR2) given in Appendix \ref{Appen_A}. Therefore, the details are omitted due to the space limitation.  

The remaining part is to prove that (P2-Eqv-SDR2) has an optimal solution where at least $K_{\rm I} - 1$ beamforming matrices $\{\bm W_i^*\}$ satisfy ${\rm rank}\left( \bm W_i^*\right) \leq 1$. Let $\check{\Xi} \triangleq \{\{\check{\bm W}_i\}, \check{\mathbf \Theta}, \{\check{\rho_i}, \check{\tau_i}\}\} $ be an arbitrary optimal solution to (P2-Eqv-SDR2), where ${\rm rank}\left( \check{\bm W}_i\right) > 1$ holds for more than one $i\in\mathcal{K_I}$. Then, we construct $\Xi^* \triangleq \{\{\bm W_i^*\}, \mathbf \Theta^*, \{\rho_i^*, \tau_i^*\}\} $ from $\check{\Xi}$ with\footnote{For any $\check{\bm W}_i = \mathbf 0$, $i\in \mathcal {K_I}\backslash\{m\}$, let $\bm W_i^* = \mathbf 0$.}
%\fontsize{11.5pt}{\baselineskip}
\begin{align}
& \mathbf \Theta^* = \check{\mathbf \Theta}, \rho_i^* = \check {\rho}_i, \tau_i^* = \check {\tau}_i, \forall i\in \mathcal {K_I},\label{constru_1}\\
& \bm w_i^* = \left(\bm h_i^H\check{\bm W}_i\bm h_i\right)^{-1/2}\check{\bm W}_i\bm h_i, \bm W_i^* = \bm w_i^*(\bm w_i^*)^H, \forall i\in \mathcal {K_I}\backslash\{m\}, \label{constru_2}\\
& \bm W_m^* = \sum_{i\in\mathcal {K_I}}\check{\bm W}_i - \sum_{i\in\mathcal {K_I}\backslash\{m\}}\bm W_i^*, \label{constru_3}
\end{align}
\normalsize where $m$ can be any element in the set $\mathcal {K_I}$. It is clear that ${\rm rank}\left( \bm W_i^* \right) \leq 1$ and $\bm W_i^* \succeq \bm 0$, $\forall i\in \mathcal {K_I}\backslash\{m\}$. In the following, we show that $\Xi^*$ is also an optimal solution to (P2-Eqv-SDR2). 

First, for any $\bm D \in \mathbb C^{M\times M}$, it holds that
%\fontsize{11.1pt}{\baselineskip}
\begin{align}\label{tr_equ}
\sum_{i\in\mathcal {K_I}}{\rm tr}\left(\bm D\bm W_i^*\right) = \sum_{i\in\mathcal {K_I}\backslash\{m\}}{\rm tr}\left(\bm D\bm W_i^*\right) + {\rm tr}\left(\bm D\bm W_m^*\right)
= \sum_{i\in\mathcal {K_I}}{\rm tr}\left(\bm D\check{\bm W}_i\right).
\end{align}
\normalsize Then, it is easy to see that constraints \eqref{P2_Eqv_SDR2_cons:slack1}, \eqref{P2_Eqv_SDR2_cons:E}, and \eqref{P2_Eqv_SDR2_cons:amp} hold for $\Xi^*$. 

Next, for any $\bm q\in \mathbb C^{M\times1}$, we have
\begin{align}\label{minus_nonnega}
\bm q^H\left(\check{\bm W}_i - \bm W_i^*\right)\bm q & = \bm q^H\check{\bm W}_i\bm q - \left(\bm h_i^H\check{\bm W}_i\bm h_i\right)^{-1}\left|\bm q^H\check{\bm W}_i\bm h_i\right|^2 \nonumber\\
\geq & \hspace{1mm} \bm q^H\check{\bm W}_i\bm q - \left(\bm h_i^H\check{\bm W}_i\bm h_i\right)^{-1}\left(\bm h_i^H\check{\bm W}_i\bm h_i\right)\left(\bm q^H\check{\bm W}_i\bm q\right) = 0, \forall i\in \mathcal {K_I}\backslash\{m\},
\end{align}
where the inequality holds due to the Cauchy-Schwarz inequality. It then follows that $\check{\bm W}_i - \bm W_i^* \succeq \bm 0$, $\forall i\in \mathcal {K_I}\backslash\{m\}$. Subsequently, we have
\begin{align}
\bm W_m^* = \sum_{i\in\mathcal {K_I}}\check{\bm W}_i - \sum_{i\in\mathcal {K_I}\backslash\{m\}}\bm W_i^* = \check{\bm W}_m + \sum_{i\in\mathcal {K_I}\backslash\{m\}}\left( \check{\bm W}_i-\bm W_i^*\right) \succeq \bm 0. 
\end{align}
Therefore, constraint \eqref{P2_Eqv_SDR2_cons:posi} holds for $\Xi^*$. 

It remains to prove that $\Xi^*$ ensures constraint \eqref{P2_Eqv_SDR2_cons:slack2}. To show this, we consider the following two cases: 
\subsubsection{For $m \in \mathcal{K_I}$} In this case, it follows that 
\begin{align}\label{case_1}
\sum_{k\in \mathcal {K_I}\backslash\{m\}}{\rm tr}\left(\bm h_m\bm h_m^H\check{\bm W}_k\right) \!- \! \sum_{k\in \mathcal {K_I}\backslash\{m\}}{\rm tr}\left(\bm h_m\bm h_m^H\bm W_k^*\right) 
= \sum_{k\in \mathcal {K_I}\backslash\{m\}}\bm h_m^H\left(\check{\bm W}_k - \bm W_k^*\right) \bm h_m \geq 0,
\end{align}
where the inequality follows from \eqref{minus_nonnega}. 

\subsubsection{For any $i\in\mathcal {K_I}\backslash\{m\}$}
First of all, it holds that $\bm h_i^H\bm W_i^*\bm h_i = \bm h_i^H\bm w_i^*(\bm w_i^*)^H\bm h_i = \bm h_i^H\check{\bm W}_k\bm h_i$, $\forall i\in\mathcal {K_I}\backslash\{m\}$. This, together with \eqref{constru_3}, yields
%\fontsize{11pt}{\baselineskip}
\begin{align}\label{case_2}
& \sum_{k\in \mathcal {K_I}\backslash\{i\}}{\rm tr}\left(\bm h_i\bm h_i^H\check{\bm W}_k\right) - \sum_{k\in \mathcal {K_I}\backslash\{i\}}{\rm tr}\left(\bm h_i\bm h_i^H\bm W_k^*\right) = \sum_{k\in \mathcal {K_I}\backslash\{i\}}{\rm tr}\left(\bm h_i\bm h_i^H\check{\bm W}_k\right) \nonumber\\
& \hspace{1.6cm} -\left( \sum_{k\in \mathcal {K_I}\backslash\{i,m\}}{\rm tr}\left(\bm h_i\bm h_i^H\bm W_k^*\right) + \sum_{k\in\mathcal {K_I}}{\rm tr}\left(\bm h_i\bm h_i^H\check{\bm W}_k\right)  - \sum_{k\in\mathcal {K_I}\backslash\{m\}}{\rm tr}\left(\bm h_i\bm h_i^H\bm W_k^*\right)  \right)  \nonumber \\
& \hspace{1.6cm} = \sum_{k\in \mathcal {K_I}\backslash\{i\}}{\rm tr}\left(\bm h_i\bm h_i^H\check{\bm W}_k\right) - \sum_{k\in\mathcal {K_I}}{\rm tr}\left(\bm h_i\bm h_i^H\check{\bm W}_k\right) + {\rm tr}\left(\bm h_i\bm h_i^H\bm W_i^*\right) \nonumber\\
& \hspace{1.6cm} = - {\rm tr}\left(\bm h_i\bm h_i^H\check{\bm W}_i\right) + {\rm tr}\left(\bm h_i\bm h_i^H\bm W_i^*\right) = 0.
\end{align}
With \eqref{case_1} and \eqref{case_2}, we have
\begin{align}
\sum_{k\in \mathcal {K_I}\backslash\{i\}}{\rm tr}\left(\bm h_i\bm h_i^H\bm W_k^*\right)  + \bar \sigma_i^2 \leq \sum_{k\in \mathcal {K_I}\backslash\{i\}}{\rm tr}\left(\bm h_i\bm h_i^H\check{\bm W}_k\right) + \bar \sigma_i^2 \leq e^{\check y_i} = e^{y_i^*}, \forall i\in\mathcal{K_I}, 
\end{align}
namely constraint \eqref{P2_Eqv_SDR2_cons:slack2} holds for $\Xi^*$. 

Finally, we note that the objective values in \eqref{P2_Eqv_SDR2_obj} achieved by $\Xi^*$ and $\check{\Xi}$ are identical. 

With the derivation above, it is verified that $\Xi^*$ is an optimal solution to (P2-Eqv-SDR2), where ${\rm rank}\left( \bm W_i^* \right) \leq 1$ holds for no less than $\left( K_{\rm I}-1\right) $ $i\in\mathcal{K_I}$. Theorem \ref{theo2} is thus proved.

\bibliographystyle{IEEEtran}
\bibliography{SWIPT}

\end{document}